\colorlet{shadecolor}{yellow}
\newtheorem{theorem}{Theorem}
\newtheorem{remark}{Remark}
\newtheorem{lem}{Lemma}
\newtheorem{definition}{Definition}
\newtheorem{assum}{Assumption}
\begin{document}
\bstctlcite{IEEEexample:BSTcontrol}
    \title{Payoff distribution in robust coalitional games on time-varying networks}
  \author{Aitazaz Ali Raja and
     Sergio Grammatico
\thanks{Aitazaz Ali Raja and Sergio Grammatico are with Delft Center for Systems and Control, TU Delft, The Netherlands. (e-mail addresses: a.a.raja@tudelft.nl; s.grammatico@tudelft.nl).}
  \thanks{This work was partially supported by NWO under research project P2P-TALES (grant n. 647.003.003) and the ERC under research project COSMOS, (802348).}
}  
\maketitle
\begin{abstract}
In this paper, we consider a sequence of transferable utility (TU) coalitional games where the coalitional values are unknown but vary within certain bounds. As a solution to the resulting family of games, we formalize the notion of ``robust core". Our main contribution is to design two distributed algorithms, namely, distributed payoff allocation and distributed bargaining, that converge to a consensual payoff distribution in the robust core. We adopt an operator-theoretic perspective to show convergence of both algorithms executed on time-varying communication networks. An energy storage optimization application motivates our framework for ``robust coalitional games".
\end{abstract}


%
\IEEEpeerreviewmaketitle


\section{Introduction}\label{sec: intro}
\IEEEPARstart{C}{oalitional} game theory provides a framework to study the behavior of selfish and rational agents when they cooperate effectively. This willingness to cooperate arise from the aspiration of gaining a higher return, compared to that for behaving as individuals \cite{myerson2013game}. \\
Specifically, a transferable utility (TU) coalitional game consists of a set of agents and a value/characteristic function that provides the \textit{value} of each of the possible coalitions \cite{myerson2013game}. Multi-agent decision problems modelled by TU coalitional games arise in many application areas, such as demand-side energy management \cite{han2018incentivizing} and cooperation between microgrids \cite{Saad2011}, in various areas of communication networks \cite{Saad2009} and as the foundation of coalitional control \cite{fele2017coalitional}.\\ 
One key problem studied by coalitional game theory is the distribution of the value generated by cooperation. Along this research direction, several solution concepts have been proposed with special attention to criteria like stability and fairness. In payoff distribution, stability means that none of the agents has an incentive to defect the coalition. Perhaps the most studied solution concepts in coalitional games that ensures the \textit{stability} of a payoff is the core. The second criterion, i.e., \textit{fairness} means that the payoff for an agent should reflect its contribution to or impact in the game. A seminal work on  the axiomatic characterization of fairness is that of Shapley \cite{shapley1953value}, where the unique value satisfying the fairness axioms is in fact known as the Shapley value which depends on the marginal contribution of each agent. The later depicts the impact each agent has on the collective value of the coalition. Other related solution concepts are also proposed in the literature, e.g. the  Nucleolus and the Kernel \cite{maschler1992bargaining}.\\
In this paper, we consider the problem of finding a payoff distribution that encourages cooperation, i.e., belongs to the core \cite{raja2020}. Now, to evaluate such a payoff, the value of each possible coalition is required, which seems implausible in many practical applications, mainly because an agent cannot be certain about the values that collaborations may generate. However, one can assume that an agent does hold a belief about the value of some possible collaborations via informed estimation or mere experience. In practice, this brings uncertainty to the coalitional values and, consequently, to the core set. It follows that one should consider solutions that are robust to uncertainty on the coalitional values. In this paper, we do that via the notion of \textit{robust }core.\\
The robustness aspect in coalitional games falls into the framework of dynamic TU coalitional games, which has been studied in the literature. Among others, the authors in \cite{filar2000dynamic} analyzed the time consistency of the Shapley value and the core under the temporal evolution of the game. Then, the authors in \cite{lehrer2013core} characterized three versions of core allocations for a dynamic game where the worth of the coalitions varies over time according to the previous allocations. In both papers, the coalitional values at the current time are determined endogenously and depend on previous events. {In \cite{kranich2005core}, the authors consider a finite sequence of exogenously defined coalitional games, where the agents receive a payoff at each stage of the sequence and consequently, the final utility of an agent depends on the whole stream of payoffs. }\\
Robust coalitional games are the subclass of dynamic TU coalitional games where the coalitional values are unknown and exogenous. 
In \cite{bauso2009robust}, Bauso and Timmer characterized robust allocation rules for the dynamic coalitional game where the average value of each coalition is known with certainty, while at each instant, the coalitional value fluctuates within a bounded polyhedron. The static version of their setup, called cooperative interval games, is presented by the authors in \cite{alparslan2009cooperation},  where the coalitional values are considered yet to be uncertain within some bounded intervals. In their setup, they have introduced the interval solutions, which assign a closed real interval as a payoff to each agent instead of a single real value. In \cite{Nedic2013}, Nedich and Bauso have presented a distributed bargaining algorithm for finding a solution in the core under the framework of robust games and dynamic average games. Inspired by the motivation of cooperative interval games and the setup in \cite{Nedic2013}, in this paper, we consider the value generated by each coalition to vary within certain bounds. \\
\textit{Motivational example}: Let us consider the energy optimization application inspired by \cite{han2018constructing} which justifies a dynamic robust coalitional game model. Consider a group of $N$ prosumers, each of whom owns a renewable energy source (RES) and energy storage (ES). Together they form an energy coalition $\mathcal{I}$ where the participating agents operate their ES systems collectively to minimize their total energy cost. When the energy coalition has an excess of energy, they can store it in an ES for later utilization and any additional energy can be sold to a retailer. The retailer buys energy and remunerates, a few hours ahead of the delivery time. The coalition considers the corresponding remuneration for optimizing their ES operation and consequently minimizing the associated cost function.  \\
Now, the cost saving as a result of the collaborative operation should be distributed in such a way that each prosumer is satisfied by its share, and hence the coalition remains intact. To achieve this, the agents assert their position by presenting the estimated cost saving of possible energy sub-coalitions, $S \subseteq \mathcal{I}$, which they could have been part of and use them to define acceptable payoffs, namely payoffs in the core. Since there is uncertainty in the RES generation, the cost savings, $v(S)$, of each sub-coalition $S \subseteq \mathcal{I}$ is uncertain. How the agents share this saving under such uncertainty is a part of the solution generated by an iterative payoff distribution methods. \\
Let \(b_{i}^t\) represent the charge or discharge of energy by the ES of prosumer $i$ at time $t$. Further, denote the net energy demand of prosumer $i$ by \(q_{i}^{t}\) and let \(p_{\text{s}}^t\) and  \(p_{\text{b}}^t\) be an electricity sell price and buy price at time $t$, respectively. { Let  $\mathrm{proj}_{\geq 0}(x) (\mathrm{proj}_{\leq 0}(x))$  denote the projection onto the non-negative (non-positive) orthant.} Then, the energy cost function of any energy sub-coalition $S \subseteq \mathcal{I}$ for a time period of length $K$ is given as:
\begin{equation*}
\begin{array}{ll}
F_S(\boldsymbol{b}):= \sum_{t =1}^K \bigg\{ p_{\text{b}}^t \bigg(&\sum_{i \in S}\mathrm{proj}_{\geq 0 }(q_{i}^{t}+b_{i}^t)\bigg)\\
&+ p_{\text{s}}^t \bigg(\sum_{i \in S}\mathrm{proj}_{\leq 0 }(q_{i}^{t}+b_{i}^t)\bigg)\bigg\},
\end{array}
\end{equation*}
where $\boldsymbol{b} \in \mathbb{R}^{NK}$ contains the ES charge and discharge profiles of all the $N$ agents over the $K$ time steps. \\
For a given coalition $S$, the coalitional energy cost for the time period of length $K$ is defined as:
\begin{equation}\label{eq: coalitional cost}
c (S) := \min _{\mathbf{b}} F_S(\mathbf{b}),
\end{equation}
and the  cost saving during this period, \(v (S)\), as the difference between the sum of the costs of the coalitions of the individual agents in $S$ and the cost of the coalition itself, namely,
\begin{equation}\label{eq: coalitional value}
\textstyle  v (S):=\sum_{i \in S} \{c_{i}\}-c (S).
\end{equation}
Note that, the cost $c(S)$ is unknown but bounded, from above when each agent $i \in S$ has RES generation equal to the installed capacity, which gives minimum value of net energy consumption $q_i^{\text{min}}$ for the whole period $K$, and from below when there is no generation, hence $q_i^\text{max}$. Due to these bounds, the cost saving \(v (S)\) is also bounded. Let $\underline{c}(S)$ be the coalitional cost corresponding to $q_i^{\text{max}}$, and let $\overline{c}_i$ be the individual cost corresponding to $q_i^{\text{min}}$, $i \in S$ then:
$$
 \textstyle v (S) \leq \sum_{i \in S} \{\overline{c}_i\} - \underline{c}(S).
$$
The uniform upper bound on the coalitional values and the fixed value of grand coalition, for a period of length $K$, allows us to consider the setup of robust games presented in \cite{Nedic2013}. {We refer to \cite{baeyens2013coalitional}, \cite{chakraborty2018sharing}, \cite{feng2019coalitional} for other engineering problems that can be modeled as robust coalitional games.} \\
\textit{Contribution}: We propose two payoff distribution algorithms within the framework of robust coalitional games where the values of the coalitions are time-varying:
\begin{itemize}
\item We {formalize} the notion of \textit{robust} core, a set of payoffs that stabilizes a grand coalition under variations in the coalitional values (Section \ref{sec:Mathematical background });
\item We develop a distributed payoff allocation algorithm where agents communicate only locally, i.e., with their neighbors, over a time-varying and repeatedly-connected communication network. We show that the proposed algorithm converges to a common payoff allocation in the \textit{robust} core (Section \ref{sec: payoff allocation});
\item {We generalize the distributed bargaining protocol in \cite{Nedic2013}}  and prove its convergence to a mutually agreed payoff in the \textit{robust} core. We assume similar communication requirements for the bargaining protocol as for the allocation process; but less information on the game is available to the agents (Section \ref{sec: bargaining});
\item {We introduce some tools from operator theory (\textit{paracontraction}, \textit{nonexpansive} operators and Krasnoselskii-Mann fixed-point iterations) to the domain of coalitional games which allows us to generalize existing results and in turn to propose faster algorithms. This approach represents a new general analysis framework for coalitional games.} 
\end{itemize}

\textit{Notation}: $\mathbb{R}$ and $\mathbb{N}$ denote the set of real and natural numbers, respectively. Given a mapping $M: \mathbb{R}^n \rightarrow \mathbb{R}^n, \mathrm{fix}(M):= \{x \in \mathbb{R}^n \mid x = M(x)\} $ denotes the set of its fixed points. $\text{Id}$ denotes the identity operator. For a closed set \(C \subseteq \mathbb{R}^{n},\) the mapping $\mathrm{proj}_C$: \(\mathbb{R}^{n} \rightarrow C\) denotes the projection onto \(C,\) i.e., \(\operatorname{proj}_C(x)=\) \(\arg \min _{y \in C}\|y-x\| .\) An over-projection operator is denoted by $\mathrm{overproj}_C : = 2\mathrm{proj}_C - \text{Id} $. For a set $S$ the power set is denoted by $2^S$. \(A \otimes B\) denotes the Kronecker product between the matrices \(A\) and \(B .\) $I_N$ denotes an identity matrix of dimension $N \times N$. {For $x_{1}, \ldots, x_{N} \in \mathbb{R}^{n},$ $\mathrm{col}(\left(x_{i}\right)_{i \in(1, \ldots, N)}):=\left[x_{1}^{\top}, \ldots, x_{N}^{\top}\right]^{\top}.$} For a norm \(\|\cdot\|_{p}\) on \(\mathbb{R}^{n}\)
and a norm \(\|\cdot\|_{q}\) on \(\mathbb{R}^{m},\) the mixed vector norm \(\|\cdot\|_{p, q}\) on \(\mathbb{R}^{m n}\) is defined as $\|x\|_{p, q} = \|\mathrm{col}(\|x_{1}\|_{p}, \cdots, \|x_{m}\|_{p} )\|_{q} $. $\mathrm{dist}(x,C)$ denotes the distance of $x$ from a closed set \(C \subseteq \mathbb{R}^{n},\) i.e., $\mathrm{dist}(x,C):= \mathrm{inf}_{y \in C} \|y-x\|$. {For a closed set \(C \subseteq \mathbb{R}^{n}\) and $N \in \mathbb{N}, C^N:=\prod_{i=1}^{N} C_{i}$}.\\
\textit{Operator-theoretic definitions}:
A mapping $T : \mathbb{R}^n \rightarrow \mathbb{R}^n$ is nonexpansive, if $ \|T(x) - T(y)\| \leq \|x - y\|,$
for all $x,y \in \mathbb{R}^n$.
A continuous mapping $M : \mathbb{R}^n \rightarrow \mathbb{R}^n$ is a paracontraction, with respect to a norm $\|\cdot\|$ on $\mathbb{R}^n$, if $ \|M(x) - y\| < \|x - y\|,$ for all $x,y \in \mathbb{R}^n \text{ such that } x \notin \mathrm{fix}(M), y \in \mathrm{fix}(M)$.

\section{Background on Coalitional Games}\label{sec:Mathematical background }
Let us first provide a brief mathematical background on coalitional game theory and then describe two payoff distribution processes namely, the payoff allocation and bargaining.\\ 
In dynamic context, a coalitional game consists of a set of agents, indexed by $\mathcal{I} = \{1, \ldots, N\}$, who cooperate to achieve selfish interests. This cooperation at each time $k \in \mathbb{N}$ results in the generation of utility, as defined by a value function $v^{k}$.
 \begin{definition}[Coalitional game {(\cite{Nedic2013}, Sec. II-A)}]\label{def: instantaneous TU coalitional game} 
Let $\mathcal{I} = \{1, \ldots, N\}$ be a set of agents. For each time $k \in \mathbb{N},$ an instantaneous coalitional game is a pair $\mathcal{G}^k = (\mathcal{I}, v^{k})$ where $v^k: 2^\mathcal{I} \to \mathbb{R}$ is a value function that assigns a real value, $v^k(S)$, to each coalition $S \subseteq \mathcal{I}$. A dynamic coalitional game is a sequence of instantaneous games, i.e., $\mathcal{G} = (\mathcal{I}, (v^{k})_{k \in \mathbb{N}})$.$\hfill \square$
 \end{definition}
 
For an instantaneous game, an instantaneous value of a coalition has to be distributed among the member agents of the coalition so that each agent receives a certain payoff.
\begin{definition} [Payoff vector] \label{def: payoff} 
Let $\mathcal{I} = \{1, \ldots, N\}$ be a set of agents and $S \subseteq \mathcal{I}$ be a coalition in an instantaneous coalitional game $\mathcal{G}^{k}=(\mathcal{I}, v^{k}), k \in \mathbb{N}$. Then, for each $i \in S$, the element $x_i^k$ of a payoff vector $\boldsymbol{x}^k \in \mathbb{R}^{|S|}$ represents the share of agent $i$ of the value $v^k  (S)$.$\hfill \square$
\end{definition}
Within the game, we assume that each agent $i  \in \mathcal{I}$ acts rationally and efficiently. This means that the payoff vector, given in Definition \ref{def: payoff}, proposed by an agent must belong to its \textit{bounding set} as defined next.
\begin{definition}[Bounding set (\cite{Nedic2013}, Sec. II-B )]\label{def: bounding set}
For an instantaneous game $\mathcal{G}^{k}=(\mathcal{I}, v^{k}), k \in \mathbb{N}$, the set
\begin{equation} \label{eq: bounding set}
\begin{array}{ll}
\mathcal{X}_i^{k} :=    \bigg\{x \in \mathbb{R}^N \mid & \sum_{j \in \mathcal{I}} x_j = v^k (\mathcal{I}),\\
  &  \sum_{j \in S} x_j \geq v^k  (S), \forall S \subset \mathcal{I} \text{ s.t. } i \in S \bigg\}
\end{array}
\end{equation}
denotes the bounding set of an agent $i \in S$. $\hfill \square$
\end{definition}
Since an agent agrees only on a payoff vector in its bounding set, we can conclude that a mutually agreed payoff shall belong to the intersection of the bounding sets of all the agents. Interestingly, this intersection corresponds to the core, the solution concept that relates to the stability of a grand coalition, i.e., a coalition of all agents. The idea of stability, in this context, is based on the disinterest of each agent in defecting a grand coalition. {Let us formalize the concept of core for instantaneous coalitional games, as in Definition \ref{def: instantaneous TU coalitional game}}. 
\begin{definition} [Instantaneous core]\label{def: inst core}
The core $\mathcal{C}$ of an instantaneous coalitional game $\mathcal{G}^{k} = (\mathcal{I},v^{k}), k \in \mathbb{N}$, is the following set of payoff vectors:
\begin{equation} \label{core}
\begin{array}{lll}
  \mathcal{C}(v^{k}) &:=  \bigg\{ x \in \mathbb{R}^N \mid  \sum_{i \in \mathcal{I}} x_i = v^k (\mathcal{I}),\\
  &\qquad \qquad \qquad \;  \sum_{i \in S} x_i \geq v^k  (S), \forall S \subseteq \mathcal{I}  \bigg\}, \\
   & \; = \bigcap_{i=1}^N \mathcal{X}_i^{k},
\end{array}
\end{equation}
with $\mathcal{X}_i^{k}$ as in (\ref{eq: bounding set}), which is also the intersection of the individual bounding sets. $\hfill \square$
\end{definition}
Each payoff allocation that belongs to the core stabilizes the grand coalition, which implies that no agent or coalition $S \subset \mathcal{I}$ has an incentive to defect from the grand coalition.\\
In this paper, we consider a similar class of dynamic coalitional games { as in \cite{Nedic2013}}, where an instantaneous value of each coalition  $v^k(S)$ belongs to a finite set bounded by a minimum and a maximum value, i.e.,  $\underline{v}(S) \leq v^k(S) \leq \overline{v}(S)$. This restriction of values on $v^k$ gives rise to a family of games which we collectively regard as a robust coalitional game.
\begin{definition}[Robust coalitional game]\label{def: robust game}
 {Let $\mathcal{I} = \{1, \ldots, N\}$ index a set of agents. A robust coalitional game $\mathcal{R} = (\mathcal{I}, \mathcal{V})$, is a set of instantaneous coalitional games $(\mathcal{I}, v^{k})$ with $v^k  \in \mathcal{V}:= \{ u_1, u_2, \ldots, u_n \} \text{ with } |\mathcal{V}| < \infty $,  for all $k \in \mathbb{N}$, where each $u_l$ is a value function such that $\underline{u}(S) \leq  u_l(S) \leq \overline{u}(S)$ for all $S \subset \mathcal{I}$ and $u_l(\mathcal{I}) = \overline{u}(\mathcal{I})$. }
$\hfill \square$
\end{definition}
{In words, a robust coalitional game $(\mathcal{I}, \mathcal{V})$ is a family of a finite number of instantaneous coalitional games such that the value of the grand coalition $\mathcal{I}$ is fixed. This setup adequately addresses the practical scenario of negotiations where after the formation of the grand coalition its value becomes certain. However, to compute a core payoff in (\ref{core}) anticipated values of sub-coalitions are also required, which involves uncertainty. We note that our formulation of robust coalitional game is called ``robust game" in \cite{Nedic2013}. Next, we formalize the core of a robust coalitional game as the robust core.} 
\begin{definition}[Robust core]\label{def: robust core} For a robust coalitional game  $\mathcal{R} = (\mathcal{I}, \mathcal{V}),$  the robust core is the intersection of all the possible instantaneous core sets, i.e., 
\begin{equation}\label{eq: robust core}
  \textstyle \mathcal{C}_0 := \bigcap_{v \in \mathcal{V}}\mathcal{C}(v).  
\end{equation} 
\end{definition} $\hfill \square$
\begin{remark}\label{rem: particular robust game}
Let $\mathcal{R} = (\mathcal{I}, \mathcal{V})$ be a robust coalitional game. If there exists $\overline{v} \in \mathcal{V}$ such that for all $k \in \mathbb{N}$, $v^k (\mathcal{I}) = \overline{v}(\mathcal{I})$ and $v^k  (S) \leq \overline{v}(S) \text{ for any coalition } S \subset \mathcal{I}$, then, {$\mathcal{C}_0 = \mathcal{C}(\overline{v}) $ and thus}  $\mathcal{C}_0 \subseteq \mathcal{C}(v^k)\text{ for all } v^k \in \mathcal{V}$. Consequently, if $\mathcal{C}_0 \neq \varnothing$ then $\mathcal{C}(v^k) \neq \varnothing \text{ for all } k \in \mathbb{N}$. $\hfill \square$
\end{remark}
In the sequel, we deal with the grand coalition only, therefore, we use the core as the solution concept. We note from (\ref{core}) that the core $\mathcal{C}(v^{k})$ is closed and convex. Furthermore, the robust core $\mathcal{C}_0$ in (\ref{eq: robust core}) is assumed to be nonempty throughout the paper. {Nonemptiness implies that even under the variations in coalitional values, a mutually agreeable payoff exists.  }
\begin{assum}\label{asm: nonempty robust core}
{The robust core is non-empty, i.e., $\mathcal{C}_0 \neq \varnothing.$}  
\end{assum}$\hfill \square$\\
Next, we discuss a possible strategy for finding a payoff vector that belongs to core, $\mathcal{C}_0$ in (\ref{eq: robust core}) of a robust game $\mathcal{R} = (\mathcal{I}, \mathcal{V})$. Since centralized methods for finding a payoff vector $ \boldsymbol{x} \in \mathcal{C}_0$ do not capture realistic scenarios of interaction among autonomous selfish agents, we propose distributed methods that allow agents to autonomously reach a common agreement on a payoff distribution.\\
The two payoff distribution methods which we focus on are distributed payoff allocation and distributed bargaining. The former is an iterative procedure in which, at each step, an agent $i$ proposes a payoff distribution $ \boldsymbol{x}_i \in \mathbb{R}^N$ by averaging the proposals of neighboring agents and by introducing an innovation factor. This procedure aspires to eventually reach a mutually agreed payoff among participating agents.\\
In a bargaining process, to propose a payoff distribution $ \boldsymbol{x}_i \in \mathbb{R}^N$, an agent $i$, after averaging the proposals of all agents, makes it compliant to its own interest. Bargaining procedure also aspires to reach a mutually agreed payoff. \\
Thus, in both methods, the proposed payoff distributions $(\boldsymbol{x}_i)_{i \in \mathcal{I}}$ must eventually reach consensus.
\begin{definition} [Consensus set]\label{def: consensus set}
The consensus set $\mathcal{A} \subset \mathbb{R}^{N^{2}}$ is defined as:
\begin{equation}\label{eq: consensus}
\mathcal{A} := \{\mathrm{col}(\boldsymbol{x}_1, \ldots, \boldsymbol{x}_N) \in \mathbb{R}^{N^{2}} \mid \boldsymbol{x}_i = \boldsymbol{x}_j, \forall i,j \in \mathcal{I}\}. 
\end{equation} $\hfill \square$
\end{definition}
In the sequel, we consider the problem of iteratively computing a mutually agreed, payoff vector in the core, i.e., $\boldsymbol{x}^{k} \to  \bar{ \boldsymbol{x}} \in \mathcal{A} \cap \mathcal{C} ^N $. We address this problem via distributed algorithms under the payoff allocation and bargaining frameworks. {Both algorithms, starting from any initial payoff proposal $\boldsymbol{x}^{0}$, converge to some consensual payoff in the robust core in  (\ref{eq: robust core}). }
\vspace{-2.5mm}
\section{Distributed Payoff Allocation}\label{sec: payoff allocation}
In coalitional games, the agents cooperate because they foresee a higher individual payoff compared to non-cooperative actions. A payoff that can sustain such cooperation, referred as a stable payoff, shall satisfy the criteria in (\ref{core}). Thus, the goal of a payoff allocation process is to let the agents achieve a consensus on a stable payoff in a distributed manner. During the allocation process, each agent proposes a payoff for all the involved agents based on the previous proposals of his neighbors and an innovation term.\\
In this section, we propose a payoff allocation in the context of robust coalitional games, where the value function $v$, at each iteration $k$, takes a value within the given bounds. { We model our setup in  a distributed paradigm, where each agent estimates the coalitional values independently, hence during each iteration different agents can assign different values to the same coalition. In context of a robust coalitional game $(\mathcal{I}, \mathcal{V})$, this distributed evaluation of the coalitional values implies that at each negotiation step, an agent can independently choose any value function $v$ from a family  $\mathcal{V}$, without central coordination. We prove that even under the distributed evaluation of the value function by the agents, and the variation of the coalitional values, the proposed payoff allocation algorithm converges to a stable payoff distribution.} In particular, our goal is to construct a distributed fixed-point algorithm, using which the agents can reach consensus (\ref{eq: consensus}) on a payoff distribution that belongs to the robust core in (\ref{eq: robust core}). 
\subsection{Distributed payoff allocation algorithm}\label{subsec: Distributed Payoff Allocation Algorithm}
Consider a set of agents $\mathcal{I} = \{1, \ldots, N\}$ who synchronously propose a distribution of utility at each discrete time step $k \in \mathbb{N}$. Specifically, each agent $i \in \mathcal{I}$ proposes a payoff distribution ${ \boldsymbol{x}}_i^{k} \in \mathbb{R}^N$, where the $j$th element denotes the share of agent $j$ proposed by agent $i$ at iteration $k \in \mathbb{N}$.\\
Let the agents communicate over a time-varying network represented by a graph $G^{k}=(\mathcal{I}, \mathcal{E}^{k}) $, where $(j, i) \in \mathcal{E}^{k}$ means that there is an active link between the agents $i$ and $j$ at iteration $k$ and they are then referred as neighbours. Therefore, the set of neighbors of agent $i$ at iteration $k$ is defined as \(\mathcal{N}_{i}^k:=\left\{j \in \mathcal{I} |(i, j) \in \mathcal{E}^{k}\right\}\). We assume that at each iteration $k$ an agent $i$ observes only the proposals of its neighbouring agents. Furthermore, we assume that the union of the communication graphs over a time period of length $Q$ is connected. The following assumption is typical for many works in multi-agent coordination, e.g. \cite[Assumption 3.2]{nedic2017achieving}. 
\begin{assum}[$Q-$connected graph]\label{asm: Q-con }
For all $k \in \mathbb{N}$, the union graph $(\mathcal{I}, \cup_{l=1}^{Q} \mathcal{E}^{l+k})$  is strongly connected for some integer $Q \geq 1$.  $\hfill \square$
\end{assum}
The edges in the communication graph $G^{k}$ are weighted using an adjacency matrix $W^{k} = [w_{i,j}^k]$, whose element $w_{i,j}^k$ represents the weight assigned by agent $i$ to the payoff distribution proposed by agent $j$, ${ \boldsymbol{x}}_j^{k}$. Note that, for some $j$, $w_{i,j}^k = 0$ implies that $j \notin \mathcal{N}_{i}^k $ hence, the state of agent $i$ is independent from that of agent $j$. We assume the adjacency matrix to be doubly stochastic with positive diagonal, as assumed in \cite[Assumption 3.3]{nedic2017achieving}, \cite[Assumption 2, 3]{nedic2010constrained}. 
\begin{assum}[Stochastic adjacency matrix]\label{asm: graph}
 For all $k \geq 0$, the adjacency matrix $W^{k} = [w_{i,j}^k]$ of the communication graph $G^k$ satisfies following conditions:
    \begin{enumerate}
        \item It is doubly stochastic, i.e., \(\sum_{j=1}^{N} w_{i,j}=\sum_{i=1}^{N} w_{i,j}=1\);
        \item its diagonal elements are strictly positive, i.e., $w_{i,i}^k > 0, \forall i \in \mathcal{I}$;
        \item $\exists$ $\gamma > 0$ such that $w_{i,j}^k \geq \gamma$ whenever $w_{i,j}^k > 0$. $\hfill \square$
    \end{enumerate} 
\end{assum}
{Assumptions \ref{asm: Q-con } and \ref{asm: graph} ensure that the agents communicate sufficiently often to each other and have sufficient influence on the resulting allocation.} We further assume that the elements of communication matrix $W^{k}$ take values from a finite set hence, finitely many adjacency matrices are available.
\begin{assum}[Finitely many adjacency matrices]\label{asm: fixed graph}
The adjacency matrices $\{W^k\}_{k \in \mathbb{N}}$, of the communication graphs belong to $\mathcal{W}$, a finite family of matrices that satisfy Assumption \ref{asm: graph}, i.e., $W^k \in \mathcal{W}$ for all $k \in \mathbb{N}$.  $\hfill \square$
\end{assum}
This assumption on the adjacency matrices allows us to exploit important results from the literature regarding finite families of mappings for proving convergence of our algorithms.\\
In our setup, at iteration $k$, each agent $i$ proposes a payoff allocation $\boldsymbol{x}_i^{k+1}$, for all agents $j \in \mathcal{I}$, as a convex combination of its estimate $\boldsymbol{x}_i^k$ and an innovation term. To generate the latter, agent $i$ first takes an average of the observed estimates of its neighbors $\boldsymbol{x}_j^k, j \in \mathcal{N}_i^k$, weighted by an adjacency matrix, and then applies an operator $T_i^k$ on the evaluated average.\\
Specifically, we propose the following update rule for each agent $i \in \mathcal{I}$:
\begin{equation*}
  \boldsymbol{x}_i^{k+1}=(1-\alpha_k) \boldsymbol{x}_i^{k}+\alpha_k T_i^{k}\left(\textstyle \sum_{j=1}^{N} w_{i,j}^k \boldsymbol{x}_{j}^{k}\right), 
\end{equation*}
that is, in collective compact form,
\begin{equation}\label{main_it_allocation}
\boldsymbol{x}^{k+1} = (1-\alpha_k) \boldsymbol{x}^{k} + \alpha_k  \boldsymbol{T}^k \boldsymbol{W}^k(\boldsymbol{x}^{k}),
\end{equation}
where $(\alpha_k)_{k \in \mathbb{N}} \in [\epsilon, 1-\epsilon]$ for some $\epsilon \in (0,1/2]$, $\boldsymbol{T}^k (\boldsymbol{x}):= \mathrm{col}(T_1^{k}(\boldsymbol{x}_1), \ldots,  T_N^{k}(\boldsymbol{x}_N)) $ and $\boldsymbol{W}^k := W^{k} \otimes I_N $ represents an adjacency matrix.\\
In (\ref{main_it_allocation}), we require the operator $T_i^k$ to be nonexpansive and its fixed-point set to include the robust core in (\ref{eq: robust core}). For example, $T_i^k$ can be the projection onto the core, i.e.,  $ T_i^k = \mathrm{proj}_{\mathcal{C}(v^k)}$. 
 \begin{assum}[Nonexpansiveness]\label{asm: fixed points of T}
 For all $k \in \mathbb{N}$, the operator $\boldsymbol{T}^k$ in (\ref{main_it_allocation}) is such that $\boldsymbol{T}^k \in \mathcal{T}$, where $\mathcal{T}$ is a finite family of nonexpansive operators such that $\bigcap_{\boldsymbol{T} \in \mathcal{T}} \mathrm{fix}(\boldsymbol{T}) = \mathcal{C}^N_0$, with $\mathcal{C}_0$ being the robust core in (\ref{eq: robust core}). $\hfill \square$
 \end{assum}
Let us elaborate on this assumption in context of a robust coalitional game $ \mathcal{R} = (\mathcal{I}, \mathcal{V})$, as in Definition \ref{def: robust game}. Here, for all $k \in \mathbb{N}$, we assume that an instantaneous core $\mathcal{C} (v^{k})$ in (\ref{core}) generated by the value function $v^k \in \mathcal{V}$ is the fixed-point set of an operator $T_i^k \text{ for all }i\in \mathcal{I}$ which implies that $\mathrm{fix}(\boldsymbol{T}^k) = \mathcal{C}^N (v^k)$. Consequently, the intersection of the fixed-point sets of the operators $\boldsymbol{T}^k \in \mathcal{T}$ corresponds to the robust core in (\ref{eq: robust core}), i.e., $\bigcap_{\boldsymbol{T} \in \mathcal{T}} \mathrm{fix}(\boldsymbol{T}) = \bigcap_{v \in \mathcal{V}}\mathcal{C}^N (v) = \mathcal{C}^N_0 $.
Furthermore, we note that having a finite family of nonexpansive operators implies that the value function $v^{k}$ can only take finitely many values within a specified set. This limitation does not pose a significant hindrance in practical scenarios. First, because the number of discrete values inside bounded intervals can be arbitrarily large and secondly, because the most common interpretation of value is in a monetary sense, which is always rounded off to some currency division.\\
Next, we assume that each $\boldsymbol{T}^k \in \mathcal{T}$ appears at least once in every $Q$ iterations of (\ref{main_it_allocation}), with $Q$ being the integer in Assumption \ref{asm: Q-con }, which can be arbitrarily large.
\begin{assum}\label{asm: Q admissible} 
Let $Q$ be the integer in Assumption \ref{asm: Q-con }. The operators $(\boldsymbol{T}^k)_{k \in \mathbb{N}}$ in (\ref{main_it_allocation}) are such that, for all $n \in \mathbb{N}$, $\bigcup_{k=n}^{n+Q}\{\boldsymbol{T}^k\} = \mathcal{T}$, with $\mathcal{T}$ as in Assumption \ref{asm: fixed points of T}. $\hfill \square$
\end{assum}
{This assumption ensures that the resulting robust core in (\ref{eq: robust core}) correspond to all the value functions that belong to the family $ \mathcal{V}$. }
Under Assumptions \ref{asm: nonempty robust core}$-$\ref{asm: Q admissible}, we can guarantee the convergence of the state in iteration (\ref{main_it_allocation}) to {some} payoff in the set $\mathcal{A}\cap \mathcal{C}_0^N$, as formalized in the following statement.  
\begin{theorem}[Convergence of payoff allocation]\label{thm: payoff allocation}
Let Assumptions \ref{asm: nonempty robust core}$-$\ref{asm: Q admissible} hold and the step sizes satisfy $\alpha_k \in [\epsilon, 1-\epsilon] \text{ for all } k \in \mathbb{N}$, for some $\epsilon > 0$. Then, {starting from any $\boldsymbol{x}^0 \in \mathbb{R}^{N^2}$,} the sequence \((\boldsymbol{x}^{k})_{k=0}^{\infty}\) generated by the iteration in (\ref{main_it_allocation}) converges to {some} $\bar{\boldsymbol{x}} \in \mathcal{A}\cap \mathcal{C}_0^N$, with $\mathcal{A}$ as in (\ref{eq: consensus}) and $\mathcal{C}_0$ being the robust core (\ref{eq: robust core}). $\hfill \square$
\end{theorem}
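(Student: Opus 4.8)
The plan is to recognize iteration (\ref{main_it_allocation}) as a time-varying Krasnoselskii--Mann iteration driven by the operators $\boldsymbol{M}^k := \boldsymbol{T}^k \boldsymbol{W}^k$, and to prove convergence by combining Fej\'er monotonicity with an asymptotic-regularity estimate and a cluster-point analysis that activates the connectivity and $Q$-admissibility hypotheses. First I would establish that each $\boldsymbol{M}^k$ is nonexpansive: since $W^k$ is doubly stochastic it is, by Birkhoff's theorem, a convex combination of permutation matrices and hence nonexpansive in the Euclidean norm; the same holds for $\boldsymbol{W}^k = W^k \otimes I_N$, and composing with the nonexpansive $\boldsymbol{T}^k$ (Assumption \ref{asm: fixed points of T}) shows $\boldsymbol{M}^k$ is nonexpansive. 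Writing $S^k := (1-\alpha_k)\mathrm{Id} + \alpha_k \boldsymbol{M}^k$, the standard averaged-operator identity gives, for every $y \in \mathrm{fix}(\boldsymbol{M}^k)$,
\begin{equation}\label{eq: plan-key}
\|\boldsymbol{x}^{k+1} - y\|^2 \le \|\boldsymbol{x}^{k} - y\|^2 - \alpha_k(1-\alpha_k)\,\|\boldsymbol{M}^k(\boldsymbol{x}^k) - \boldsymbol{x}^k\|^2 .
\end{equation}

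I would then note that $\mathcal{A}\cap\mathcal{C}_0^N \subseteq \mathrm{fix}(\boldsymbol{M}^k)$ for every $k$: a consensus point is fixed by the doubly stochastic averaging $\boldsymbol{W}^k$, and a point of $\mathcal{C}_0^N$ is fixed by $\boldsymbol{T}^k$ (Assumption \ref{asm: fixed points of T}). Consequently, for any $y$ in the nonempty (Assumption \ref{asm: nonempty robust core}) set $\mathcal{A}\cap\mathcal{C}_0^N$, inequality (\ref{eq: plan-key}) makes $(\|\boldsymbol{x}^k - y\|)_k$ nonincreasing, so $(\boldsymbol{x}^k)_k$ is bounded and Fej\'er monotone; summing (\ref{eq: plan-key}) and using $\alpha_k(1-\alpha_k) \ge \epsilon(1-\epsilon) > 0$ yields the asymptotic regularity $\|\boldsymbol{M}^k(\boldsymbol{x}^k) - \boldsymbol{x}^k\| \to 0$, equivalently $\boldsymbol{x}^{k+1} - \boldsymbol{x}^k \to 0$.

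The core of the argument is to show that every cluster point of $(\boldsymbol{x}^k)_k$ lies in $\mathcal{A}\cap\mathcal{C}_0^N$. Given a subsequence $\boldsymbol{x}^{k_j} \to \boldsymbol{x}^\star$, I would pass to a further subsequence along which, for each offset $\ell \in \{0,\dots,Q\}$, the triple $(\alpha_{k_j+\ell}, \boldsymbol{T}^{k_j+\ell}, \boldsymbol{W}^{k_j+\ell})$ is constant, which is possible because $\mathcal{T}$ and $\mathcal{W}$ are finite (Assumptions \ref{asm: fixed graph}, \ref{asm: fixed points of T}) and $\alpha_{k_j+\ell}$ lives in the compact interval $[\epsilon,1-\epsilon]$. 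Because $\boldsymbol{x}^{k+1}-\boldsymbol{x}^k \to 0$, all shifted iterates $\boldsymbol{x}^{k_j+\ell}$ converge to the same $\boldsymbol{x}^\star$, and passing to the limit in $\|\boldsymbol{M}^{k_j+\ell}(\boldsymbol{x}^{k_j+\ell}) - \boldsymbol{x}^{k_j+\ell}\|\to 0$ yields $\boldsymbol{T}_\ell \boldsymbol{W}_\ell \boldsymbol{x}^\star = \boldsymbol{x}^\star$ for every $\ell$, where the windowed operators $\boldsymbol{T}_\ell$ exhaust $\mathcal{T}$ (Assumption \ref{asm: Q admissible}) and the associated graphs have strongly connected union (Assumption \ref{asm: Q-con }). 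Fixing $y \in \mathcal{A}\cap\mathcal{C}_0^N$ and inserting $\boldsymbol{T}_\ell \boldsymbol{W}_\ell y = y$ into the chain $\|\boldsymbol{x}^\star - y\| = \|\boldsymbol{T}_\ell\boldsymbol{W}_\ell \boldsymbol{x}^\star - \boldsymbol{T}_\ell \boldsymbol{W}_\ell y\| \le \|\boldsymbol{W}_\ell(\boldsymbol{x}^\star - y)\| \le \|\boldsymbol{x}^\star - y\|$ forces every inequality to be an equality, so $\boldsymbol{x}^\star - y$ is norm-preserved by each $\boldsymbol{W}_\ell$.

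Here lies the main obstacle: concluding consensus from these equalities. Decomposing $\boldsymbol{x}^\star - y$ into its average and disagreement components and using that each $W_\ell$, being doubly stochastic, preserves the average, the equalities reduce to the disagreement part being norm-preserved by every $W_\ell$; the positive-diagonal and uniform-weight conditions (Assumption \ref{asm: graph}) together with the strongly connected union (Assumption \ref{asm: Q-con }) are exactly what rule out any nonzero disagreement vector being isometrically preserved by all windowed matrices (a pure permutation, for instance, is excluded by the positive diagonal), forcing $\boldsymbol{x}^\star \in \mathcal{A}$. Once $\boldsymbol{x}^\star \in \mathcal{A}$, each $\boldsymbol{W}_\ell \boldsymbol{x}^\star = \boldsymbol{x}^\star$, so the windowed relations collapse to $\boldsymbol{T}_\ell \boldsymbol{x}^\star = \boldsymbol{x}^\star$ for all $\ell$, whence $\boldsymbol{x}^\star \in \bigcap_{\boldsymbol{T}\in\mathcal{T}}\mathrm{fix}(\boldsymbol{T}) = \mathcal{C}_0^N$ by Assumption \ref{asm: fixed points of T}, giving $\boldsymbol{x}^\star \in \mathcal{A}\cap\mathcal{C}_0^N$. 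Finally, I would invoke the standard fact that a Fej\'er-monotone sequence with respect to $\mathcal{A}\cap\mathcal{C}_0^N$ that possesses a cluster point in that set converges to it, delivering $\boldsymbol{x}^k \to \bar{\boldsymbol{x}} \in \mathcal{A}\cap\mathcal{C}_0^N$. An alternative, more operator-theoretic route would observe that each $S^k$ is a paracontraction with $\mathrm{fix}(S^k)=\mathrm{fix}(\boldsymbol{M}^k)$ and invoke a convergence theorem for products of paracontractions drawn from a finite family with nonempty common fixed-point set, the windowing argument again supplying the identification of that common set with $\mathcal{A}\cap\mathcal{C}_0^N$.
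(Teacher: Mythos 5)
Your proposal is correct, but it takes a genuinely different route from the paper. The paper's proof is deliberately short and operator-theoretic: it forms the composed operators $\boldsymbol{U}_f = \boldsymbol{T}_f\boldsymbol{W}_f$, invokes Browder's theorem on admissible sequences of nonexpansive maps (Lemma~\ref{lem: browder}) to get convergence of the Krasnoselskii--Mann iteration to a common fixed point, and then identifies that common fixed-point set as $\mathcal{A}\cap\mathcal{C}_0^N$ via the composition lemmas (Lemmas~\ref{lem: Doubly stochastic matrix}--\ref{prop:Composition of paracontracting operators }) and the Perron--Frobenius theorem, packaged as Lemma~\ref{lem: payoff allocation fixed-points}. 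You instead re-prove the needed special case of Browder's theorem from first principles: the averaged-operator inequality gives Fej\'er monotonicity with respect to $\mathcal{A}\cap\mathcal{C}_0^N$ (your observation that this set lies in every $\mathrm{fix}(\boldsymbol{M}^k)$ is the right anchor, and Assumption~\ref{asm: nonempty robust core} gives nonemptiness) plus asymptotic regularity, and the window-plus-pigeonhole cluster-point analysis replaces the admissibility bookkeeping. This buys something real: the paper must argue that the composed sequence $(\boldsymbol{T}^k\boldsymbol{W}^k)_k$ is admissible in Browder's sense, which concerns the joint recurrence of pairs $(\boldsymbol{T}_f, W_f)$ and is handled somewhat delicately there (via the set $\mathcal{D}$ of finitely recurring indices), whereas your argument only uses what Assumptions~\ref{asm: Q-con } and~\ref{asm: Q admissible} directly guarantee for \emph{every} window --- exhaustion of $\mathcal{T}$ and strong connectivity of the union graph --- so it is more self-contained and arguably more robust. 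The one step you flag as the ``main obstacle'' is in fact standard and is exactly the paracontraction content the paper cites from Lemma~\ref{lem: Doubly stochastic matrix}: writing $W = \gamma I + (1-\gamma)\tilde{W}$ with $\tilde{W}$ doubly stochastic (possible by Assumption~\ref{asm: graph}(2)--(3)) yields
\begin{equation*}
\|\boldsymbol{W} z\|^2 \le \|z\|^2 - \gamma(1-\gamma)\|z - \tilde{\boldsymbol{W}} z\|^2,
\end{equation*}
so norm preservation forces $\boldsymbol{W}_\ell(\boldsymbol{x}^\star - y) = \boldsymbol{x}^\star - y$ for each $\ell$ individually, and the per-edge equality argument together with the strongly connected union then forces consensus; you should make this explicit rather than gesturing at ``ruling out isometric preservation,'' but no idea is missing. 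Your closing alternative (products of paracontractions from a finite family) is essentially the paper's proof strategy for the bargaining protocol in Theorem~\ref{theorem: main}, transplanted here via the averaged maps $S^k$.
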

\subsection{Convergence Analysis}
To prove the convergence of the payoff allocation process in (\ref{main_it_allocation}), we build upon a well-known result on time-varying nonexpansive mappings, presented by Browder in \cite{browder1967convergence}. To proceed, let us first define the notion of admissible sequence and then recall Browder's result.
\begin{definition}[Admissible sequence (\cite{browder1967convergence}, Def. 5)] \label{def: admissible sequence} A function  $j: \mathbb{N}_{>0} \to \mathcal{D} \subseteq \mathbb{N}_{>0}$ is said to be an admissible sequence of integers in \(\mathcal{D}\) if for each integer $r \in \mathcal{D}$, there exists $m(r) \in \mathbb{N}_{>0}$ such that the image under the function \(j\) of \(m(r)\) successive integers contains $r$, i.e., \(r \in \{j(n), j(n+1), \ldots, j(n+m(r))\}\), for all $n \in \mathrm{dom}(j)$. $\hfill \square$
\end{definition}
For example, every $p-$periodic sequence, i.e., $\{j^k\}_{k \in \mathbb{N}}$ where $j^{k+p} = j^k$, is admissible with $m(r) = p \text{ for all } r \in \mathrm{ran}(j)$ and a sequence $\{j^k = k\}_{k \in \mathbb{N}}$ is a non-admissible sequence. 
\begin{lem}[\cite{browder1967convergence}, Thm. 5]\label{lem: browder}
Let $(U_{r})_{r \in \mathcal{D}}, \mathcal{D} \subseteq \mathbb{N}_{>0}$, be a (finite or infinite) sequence of nonexpansive mappings such that $C=\bigcap_{r \in \mathcal{D}} \mathrm{fix}\left(U_{r}\right) \neq \varnothing$. Let \(\left(\alpha_{k}\right)_{k \in \mathbb{N}}\) be a sequence where $\alpha_k \in [\epsilon, 1-\epsilon]$ for some $\epsilon \in (0,1/2]$, and let \((j^{k})_{k \in \mathbb{N}}\) be an admissible sequence of integers in \(\mathcal{D} .\) Then, the sequence $(\boldsymbol{x}^{k})_{k \in \mathbb{N}_{>0}}$ generated by 
\begin{equation*}\label{eq: browder}
\boldsymbol{x}^{k+1}:=\left(1-\alpha_{k}\right) \boldsymbol{x}^{k} + \alpha_{k} U_{j^{k}}(\boldsymbol{x}^{k})
\end{equation*}
converges to some $\bar{\boldsymbol{x}} \in C $.$\hfill \square$
\end{lem}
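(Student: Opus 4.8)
The plan is to run the standard Krasnoselskii--Mann argument in $\mathbb{R}^n$, organized around four steps: Fej\'er monotonicity with respect to $C$, asymptotic regularity of the residuals along the driving sequence, identification of every subsequential limit with a point of $C$ via admissibility, and finally promotion of subsequential convergence to convergence of the whole sequence.

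First I would fix an arbitrary $y \in C$. Since $y \in \mathrm{fix}(U_r)$ for every $r$ and each $U_r$ is nonexpansive, we have $\|U_{j^k}(\boldsymbol{x}^k)-y\| \le \|\boldsymbol{x}^k - y\|$. Writing the update as $\boldsymbol{x}^{k+1} = \boldsymbol{x}^{k} + \alpha_k(U_{j^k}(\boldsymbol{x}^k) - \boldsymbol{x}^k)$, expanding $\|\boldsymbol{x}^{k+1}-y\|^2$ and bounding the cross term via this inequality yields the averaged-operator estimate
\begin{equation*}
\|\boldsymbol{x}^{k+1}-y\|^2 \le \|\boldsymbol{x}^{k}-y\|^2 - \alpha_k(1-\alpha_k)\,\|U_{j^k}(\boldsymbol{x}^k)-\boldsymbol{x}^k\|^2 .
\end{equation*}
Two consequences follow at once. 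The sequence $(\|\boldsymbol{x}^k - y\|)_{k}$ is non-increasing, so $(\boldsymbol{x}^k)_k$ is bounded (this is Fej\'er monotonicity). Moreover, since $\alpha_k(1-\alpha_k) \ge \epsilon(1-\epsilon) > 0$, telescoping the one-step inequality over $k = 0, \ldots, K-1$ bounds $\epsilon(1-\epsilon)\sum_{k=0}^{K-1}\|U_{j^k}(\boldsymbol{x}^k)-\boldsymbol{x}^k\|^2$ by $\|\boldsymbol{x}^0 - y\|^2$; letting $K \to \infty$ gives $\sum_k \|U_{j^k}(\boldsymbol{x}^k)-\boldsymbol{x}^k\|^2 < \infty$, hence $\|U_{j^k}(\boldsymbol{x}^k)-\boldsymbol{x}^k\| \to 0$ and therefore $\|\boldsymbol{x}^{k+1}-\boldsymbol{x}^k\| = \alpha_k\|U_{j^k}(\boldsymbol{x}^k)-\boldsymbol{x}^k\| \to 0$.

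The hard part will be the third step: upgrading asymptotic regularity along the driving indices into the statement that a limit point is fixed by \emph{every} operator. By Bolzano--Weierstrass, boundedness yields a subsequence $\boldsymbol{x}^{k_\ell} \to \bar{\boldsymbol{x}}$. Along this subsequence we control only the residual of $U_{j^{k_\ell}}$, and $j^{k_\ell}$ need not visit all of $\mathcal{D}$; this is precisely where admissibility (Definition \ref{def: admissible sequence}) is indispensable. Fix $r \in \mathcal{D}$ with window length $m(r)$. For each $\ell$ the window $\{k_\ell, \ldots, k_\ell + m(r)\}$ contains an index $n_\ell$ with $j^{n_\ell} = r$. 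Because the successive displacements vanish and $n_\ell - k_\ell \le m(r)$ is uniformly bounded, the telescoping estimate $\|\boldsymbol{x}^{n_\ell}-\boldsymbol{x}^{k_\ell}\| \le \sum_{t=k_\ell}^{n_\ell - 1}\|\boldsymbol{x}^{t+1}-\boldsymbol{x}^{t}\| \to 0$ gives $\boldsymbol{x}^{n_\ell} \to \bar{\boldsymbol{x}}$. Since $j^{n_\ell} = r$, we have $\|U_r(\boldsymbol{x}^{n_\ell})-\boldsymbol{x}^{n_\ell}\| \to 0$, and continuity of the nonexpansive map $U_r$ passes to the limit to yield $U_r(\bar{\boldsymbol{x}}) = \bar{\boldsymbol{x}}$. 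As $r \in \mathcal{D}$ is arbitrary, $\bar{\boldsymbol{x}} \in \bigcap_{r \in \mathcal{D}}\mathrm{fix}(U_r) = C$.

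Finally I would close the argument by invoking Fej\'er monotonicity once more, now with the specific choice $y = \bar{\boldsymbol{x}} \in C$: the sequence $(\|\boldsymbol{x}^k - \bar{\boldsymbol{x}}\|)_k$ is non-increasing, hence convergent, while the subsequence $\boldsymbol{x}^{k_\ell} \to \bar{\boldsymbol{x}}$ forces its limit to be $0$. Therefore the entire sequence converges, $\boldsymbol{x}^k \to \bar{\boldsymbol{x}} \in C$, which is the claim. The only genuinely delicate point is the interplay in the third step between square-summability of the residuals (which forces $\|\boldsymbol{x}^{k+1}-\boldsymbol{x}^k\|\to 0$) and the fixed, finite admissibility window $m(r)$; everything else is routine once the averaged-operator inequality is in hand. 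In infinite dimensions one would replace Bolzano--Weierstrass by weak compactness and continuity by demiclosedness of $\mathrm{Id} - U_r$, but the finite-dimensional setting here avoids that complication.
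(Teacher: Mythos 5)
The paper never proves this lemma: it is imported verbatim from Browder \cite{browder1967convergence} (Thm.~5) and used as a black box in the proof of Theorem~\ref{thm: payoff allocation}, so your blind attempt is by necessity a different route --- and it is a correct, self-contained one. Your four steps are sound: the estimate $\|\boldsymbol{x}^{k+1}-y\|^2 \le \|\boldsymbol{x}^{k}-y\|^2-\alpha_k(1-\alpha_k)\|U_{j^k}(\boldsymbol{x}^k)-\boldsymbol{x}^k\|^2$ is exact (identity plus nonexpansiveness) for the Euclidean norm; $\alpha_k(1-\alpha_k)\ge \epsilon(1-\epsilon)>0$ on $[\epsilon,1-\epsilon]$ legitimizes the telescoping and gives square-summable residuals; and your handling of admissibility is precisely the right mechanism --- for fixed $r$, the window of length $m(r)$ supplies indices $n_\ell$ with $j^{n_\ell}=r$ and $n_\ell-k_\ell\le m(r)$, so $\|\boldsymbol{x}^{n_\ell}-\boldsymbol{x}^{k_\ell}\|$ is a sum of at most $m(r)$ vanishing displacements, whence $U_r(\bar{\boldsymbol{x}})=\bar{\boldsymbol{x}}$ by continuity; closing with Fej\'er monotonicity at $y=\bar{\boldsymbol{x}}$ is standard and correct. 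Your argument even covers infinite $\mathcal{D}$, since $r$ is fixed with its own finite $m(r)$ before any limit is taken. The one caveat worth recording is scope: the convexity identity you expand for $\|\cdot\|^2$ is an inner-product (Hilbert) identity, so your proof establishes the lemma for $\mathbb{R}^n$ with the Euclidean norm, whereas Browder's Theorem~5 lives in uniformly convex Banach spaces and uses heavier machinery (which, as you note, would require demiclosedness of $\mathrm{Id}-U_r$ in place of plain continuity). Since the paper applies the lemma only with operators that are nonexpansive in the Euclidean-type norm $\|\cdot\|_{2,2}$ on $\mathbb{R}^{N^2}$, this restriction is immaterial here, and your elementary Krasnoselskii--Mann/Fej\'er proof buys transparency that the citation alone does not provide.
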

Next, we recall some useful properties of nonexpansive and paracontraction operators. 
\begin{lem}[Doubly stochastic matrix (\cite{Fullmer2018}, Prop. 5 ), (\cite{grammatico2017proximal}, Prop. 3)]\label{lem: Doubly stochastic matrix}
If $W$ is a doubly stochastic matrix then, the linear operator defined by the matrix $W \otimes I_{n}$ is nonexpansive. Moreover, if the operator $(W \otimes I_{n}) (\cdot)$ satisfies Assumption \ref{asm: graph} then, it is also a paracontraction with respect to the mixed vector norm $\|\cdot\|_{2,2}$. $\hfill \square$
\end{lem}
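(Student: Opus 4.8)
The statement has two parts, and I would treat them in turn, in both cases working directly with the block structure induced by the Kronecker product and the mixed norm $\|\cdot\|_{2,2}$, rather than through spectral estimates. Throughout, partition a generic vector $\boldsymbol{z}\in\mathbb{R}^{Nn}$ as $\boldsymbol{z}=\mathrm{col}(z_1,\dots,z_N)$ with $z_i\in\mathbb{R}^n$, so that the $i$th block of $(W\otimes I_n)\boldsymbol{z}$ is $\sum_{j=1}^{N} w_{i,j}\,z_j$ and $\|\boldsymbol{z}\|_{2,2}^2=\sum_{i=1}^N\|z_i\|_2^2$.

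For the nonexpansiveness claim I would exploit linearity: it suffices to prove $\|(W\otimes I_n)\boldsymbol{z}\|_{2,2}\le\|\boldsymbol{z}\|_{2,2}$ for every $\boldsymbol{z}$ and then apply this to $\boldsymbol{z}=\boldsymbol{x}-\boldsymbol{y}$. Since $W$ is row-stochastic, each row $(w_{i,j})_j$ is a vector of convex weights, so convexity of $t\mapsto\|t\|_2^2$ (Jensen's inequality) gives $\|\sum_j w_{i,j}z_j\|_2^2\le\sum_j w_{i,j}\|z_j\|_2^2$. Summing over $i$ and exchanging the order of summation yields $\|(W\otimes I_n)\boldsymbol{z}\|_{2,2}^2\le\sum_j(\sum_i w_{i,j})\|z_j\|_2^2=\sum_j\|z_j\|_2^2=\|\boldsymbol{z}\|_{2,2}^2$, where column-stochasticity $\sum_i w_{i,j}=1$ is used in the middle equality. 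This is exactly nonexpansiveness.

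For the paracontraction claim, continuity is immediate because the operator is linear. Fix $\boldsymbol{y}\in\mathrm{fix}(W\otimes I_n)$ and $\boldsymbol{x}\notin\mathrm{fix}(W\otimes I_n)$ and set $\boldsymbol{z}=\boldsymbol{x}-\boldsymbol{y}$; using $(W\otimes I_n)\boldsymbol{y}=\boldsymbol{y}$ one gets $(W\otimes I_n)\boldsymbol{x}-\boldsymbol{y}=(W\otimes I_n)\boldsymbol{z}$, and by the same identity $\boldsymbol{z}\notin\mathrm{fix}(W\otimes I_n)$. Hence the required inequality $\|(W\otimes I_n)\boldsymbol{x}-\boldsymbol{y}\|_{2,2}<\|\boldsymbol{x}-\boldsymbol{y}\|_{2,2}$ is equivalent to showing $\|(W\otimes I_n)\boldsymbol{z}\|_{2,2}<\|\boldsymbol{z}\|_{2,2}$ whenever $\boldsymbol{z}\notin\mathrm{fix}$. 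To obtain strictness I would track the equality case in the estimate above: since $\|\cdot\|_2^2$ is \emph{strictly} convex, equality $\|\sum_j w_{i,j}z_j\|_2^2=\sum_j w_{i,j}\|z_j\|_2^2$ forces all blocks $z_j$ with $w_{i,j}>0$ to coincide. If equality held for every $i$, then, invoking the strictly positive diagonal $w_{i,i}>0$ from condition 2) of Assumption~\ref{asm: graph}, that common value is $z_i$, so $\sum_j w_{i,j}z_j=z_i\sum_j w_{i,j}=z_i$ for all $i$; that is, $(W\otimes I_n)\boldsymbol{z}=\boldsymbol{z}$, contradicting $\boldsymbol{z}\notin\mathrm{fix}$. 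Therefore the inequality is strict, which is the paracontraction property.

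The main obstacle, and the only place where the positive-diagonal hypothesis genuinely enters, is this equality analysis: nonexpansiveness alone does not rule out norm-preserving directions, since a zero-diagonal permutation matrix is an isometry in $\|\cdot\|_{2,2}$ yet fixes far fewer vectors than it preserves in norm. Condition 2) of Assumption~\ref{asm: graph} is precisely what collapses \emph{``the norm of $\boldsymbol{z}$ is preserved''} into \emph{``$\boldsymbol{z}$ is a fixed point''}, and it has to be applied through the strict-convexity characterization rather than any eigenvalue argument, because for a single $W$ the fixed-point set need not reduce to the consensus space. The coefficient lower bound in condition 3) of Assumption~\ref{asm: graph} plays no role for a single matrix and can be disregarded here.
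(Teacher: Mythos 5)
Your proof is correct; note, however, that the paper contains no proof of this lemma to compare against --- it is imported as a known result, credited to \cite{Fullmer2018} (Prop.~5) and \cite{grammatico2017proximal} (Prop.~3), and used as a black box in the proofs of Lemmas~\ref{lem: payoff allocation fixed-points} and \ref{lemma: fix points}. Your self-contained derivation is the standard and, in my view, the right one: Jensen's inequality with the row weights gives the per-block bound, column-stochasticity closes the sum over $i$, and linearity converts the homogeneous estimate into nonexpansiveness. The two delicate points are exactly the ones you handle explicitly. First, reducing the paracontraction inequality to $\|(W\otimes I_n)\boldsymbol{z}\|_{2,2}<\|\boldsymbol{z}\|_{2,2}$ for $\boldsymbol{z}\notin\mathrm{fix}(W\otimes I_n)$ is legitimate only because the operator is linear and $\boldsymbol{y}$ is a fixed point, and you verify that $\boldsymbol{z}\notin\mathrm{fix}(W\otimes I_n)$ rather than assuming it. Second, the strictness analysis via the equality case of strict convexity, with the positive diagonal (condition 2 of Assumption~\ref{asm: graph}) forcing the common block value in each row's support to be $z_i$ itself, is precisely what collapses norm preservation into the fixed-point property; your zero-diagonal permutation counterexample shows the hypothesis is not removable, and you are also right that the uniform bound $\gamma$ in condition 3 plays no role for a single matrix (it matters only for the time-varying families elsewhere in the paper). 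Your observation that a spectral argument would be inappropriate is likewise sound: under Assumption~\ref{asm: graph} alone (e.g.\ $W=I_N$), $\mathrm{fix}(W\otimes I_n)$ can be strictly larger than the consensus set, so one cannot argue via a spectral gap on the consensus complement; the paper only recovers $\mathcal{A}$ after composing matrices whose union graph is connected, via Perron--Frobenius, in Lemmas~\ref{lem: payoff allocation fixed-points} and \ref{lemma: fix points}. The only alternative worth mentioning is that nonexpansiveness alone also follows from the interpolation bound $\|W\|_{2}\leq\sqrt{\|W\|_{1}\|W\|_{\infty}}=1$, but your convexity route is preferable since it feeds directly into the strictness analysis.
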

The fixed-point sets of nonexpansive and paracontraction operators relate to their compositions as follows.
\begin{lem}[Composition of nonexpansive operators (\cite{bauschke2011convex}, Prop. 4.49)]\label{prop:Composition of nonexpansive operators }
Let $T_1, T_2: \mathbb{R}^n \rightarrow \mathbb{R}^n$ be nonexpansive operators with respect to the norm $\|\cdot\|$. Then, the composition $T_1 \circ T_2$ is also nonexpansive with respect to the norm $\|\cdot \|$. Moreover, if either \(T_{1}\) or \(T_{2}\) is a paracontraction and \(\operatorname{fix} (T_{1}) \cap \operatorname{fix} (T_{2}) \neq \varnothing\) then, \(\mathrm{fix} (T_{1} \circ T_{2})=\mathrm{fix} (T_{1}) \cap \mathrm{fix} (T_{2})\). $\hfill \square$
\end{lem}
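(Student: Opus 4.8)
The plan is to establish the two claims separately: nonexpansiveness of the composition, which is immediate by chaining the defining inequalities, and the fixed-point identity $\mathrm{fix}(T_1 \circ T_2) = \mathrm{fix}(T_1) \cap \mathrm{fix}(T_2)$, whose nontrivial direction exploits the strict contraction toward fixed points supplied by the paracontraction hypothesis. For the first claim, I would simply compose the two estimates: for any $x,y \in \mathbb{R}^n$, nonexpansiveness of $T_1$ gives $\|T_1(T_2(x)) - T_1(T_2(y))\| \le \|T_2(x) - T_2(y)\|$, while nonexpansiveness of $T_2$ gives $\|T_2(x) - T_2(y)\| \le \|x-y\|$; combining the two yields $\|(T_1\circ T_2)(x) - (T_1\circ T_2)(y)\| \le \|x-y\|$, so $T_1 \circ T_2$ is nonexpansive.

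For the fixed-point identity, the inclusion $\mathrm{fix}(T_1)\cap\mathrm{fix}(T_2) \subseteq \mathrm{fix}(T_1\circ T_2)$ is trivial, since any $x$ with $T_1(x)=x$ and $T_2(x)=x$ satisfies $T_1(T_2(x)) = T_1(x) = x$. The substance lies in the reverse inclusion. I would fix an arbitrary anchor $y \in \mathrm{fix}(T_1)\cap\mathrm{fix}(T_2)$, which is nonempty by hypothesis, together with an arbitrary $x \in \mathrm{fix}(T_1\circ T_2)$, and run a distance-contraction argument toward $y$. Suppose first that $T_2$ is the paracontraction. Using $T_1(y)=y$ and nonexpansiveness of $T_1$, I obtain $\|x-y\| = \|T_1(T_2(x)) - T_1(y)\| \le \|T_2(x) - y\|$. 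If $x \notin \mathrm{fix}(T_2)$, then the paracontraction property of $T_2$ with $y \in \mathrm{fix}(T_2)$ forces $\|T_2(x)-y\| < \|x-y\|$, giving the contradiction $\|x-y\| < \|x-y\|$. Hence $T_2(x)=x$, and consequently $x = T_1(T_2(x)) = T_1(x)$, placing $x$ in both fixed-point sets as required.

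The symmetric case in which $T_1$ is the paracontraction is the one subtle point, and I would handle it by reordering the estimates. Here I write $\|x-y\| = \|T_1(T_2(x)) - y\|$ and apply the paracontraction property of $T_1$ at the intermediate point $T_2(x)$: if $T_2(x) \notin \mathrm{fix}(T_1)$, then $\|T_1(T_2(x)) - y\| < \|T_2(x) - y\| \le \|x - y\|$, where the final step uses nonexpansiveness of $T_2$ together with $T_2(y)=y$. This again yields a contradiction, so $T_2(x) \in \mathrm{fix}(T_1)$, i.e. $T_1(T_2(x)) = T_2(x)$; combined with $T_1(T_2(x)) = x$ this gives $x = T_2(x)$, and then $x = T_1(x)$.

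The only real obstacle is this asymmetry: because composition is noncommutative, the two cases cannot be merged, and one must be careful to contract the correct intermediate point — $x$ when $T_2$ is the paracontraction, but $T_2(x)$ when $T_1$ is — so that the nonexpansive factor and the paracontractive factor are applied in the right order. Once the bookkeeping is set up correctly, both cases reduce to the same impossible strict inequality $\|x-y\| < \|x-y\|$, which closes the argument.
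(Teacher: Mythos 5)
Your proof is correct. The paper itself gives no proof of this lemma---it is imported directly from the literature (\cite{bauschke2011convex}, Prop.~4.49)---and your argument is precisely the standard one behind that cited result: chaining the nonexpansiveness inequalities for the first claim, and for the fixed-point identity running the distance-to-anchor contradiction $\|x-y\| < \|x-y\|$, with the correct handling of the noncommutative asymmetry (contracting at $x$ when $T_2$ is the paracontraction, at the intermediate point $T_2(x)$ when $T_1$ is). Both cases are complete, and you only use the paracontraction inequality at points verified to lie outside the relevant fixed-point set, so there is nothing to repair.
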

\begin{lem}[Composition of paracontracting operators (\cite{Fullmer2018}, Prop. 1)]\label{prop:Composition of paracontracting operators }
Suppose $M_1, M_2 : \mathbb{R}^n \rightarrow \mathbb{R}^n$ are paracontractions with respect to same norm $\|\cdot\|$ and $\mathrm{fix}(M_1) \cap \mathrm{fix}(M_2) \neq \varnothing$. Then, the composition $M_1 \circ M_2$ is a paracontraction with respect to the norm $\|\cdot \|$ and $\mathrm{fix}(M_1 \circ M_2) = \mathrm{fix}(M_1) \cap \mathrm{fix}(M_2)$. $\hfill \square$
\end{lem}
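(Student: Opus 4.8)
The plan is to isolate a single chained inequality that governs the composition and then read off both conclusions from it. Write $F_1 := \mathrm{fix}(M_1)$ and $F_2 := \mathrm{fix}(M_2)$, and fix any common fixed point $y \in F_1 \cap F_2$, which exists by hypothesis. Continuity of $M_1 \circ M_2$ is immediate, since a composition of continuous maps is continuous. For the metric estimate I would chain the two defining inequalities: for any $x \in \mathbb{R}^n$,
\begin{equation*}
\|M_1(M_2(x)) - y\| \leq \|M_2(x) - y\| \leq \|x - y\|,
\end{equation*}
where the first bound uses $y \in F_1$ together with the paracontraction property of $M_1$ applied at the point $M_2(x)$, and the second uses $y \in F_2$ together with the paracontraction property of $M_2$ applied at $x$. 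Here I rely on the observation that, for a fixed point $y$ of a paracontraction $M$, one always has $\|M(x)-y\| \leq \|x-y\|$ (with equality when $x$ is itself a fixed point, since then $M(x)=x$), and the inequality is \emph{strict} precisely when the argument lies outside the corresponding fixed-point set.

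The decisive step is to determine exactly when this chain collapses to an equality. The first inequality is strict whenever $M_2(x) \notin F_1$, and the second is strict whenever $x \notin F_2$; hence equality throughout forces both $x \in F_2$ and $M_2(x) \in F_1$. But $x \in F_2$ gives $M_2(x) = x$, so the condition $M_2(x) \in F_1$ reduces to $x \in F_1$. Therefore the chain is an equality if and only if $x \in F_1 \cap F_2$, and consequently
\begin{equation*}
\|M_1(M_2(x)) - y\| < \|x - y\| \qquad \text{whenever } x \notin F_1 \cap F_2 .
\end{equation*}
This strict-contraction-toward-common-fixed-points statement is the core of the argument, and it is the step I expect to require the most care: the bookkeeping must guarantee strictness in both the case $x \notin F_2$ and the case $x \in F_2 \setminus F_1$, since only one of the two constituent inequalities is strict at a time.

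From here both conclusions follow quickly. For the fixed-point identity, the inclusion $F_1 \cap F_2 \subseteq \mathrm{fix}(M_1 \circ M_2)$ is direct: if $x \in F_1 \cap F_2$ then $M_2(x) = x$ and $M_1(x) = x$, so $M_1(M_2(x)) = x$. For the reverse inclusion I would argue by contradiction: suppose $x \in \mathrm{fix}(M_1 \circ M_2)$ but $x \notin F_1 \cap F_2$; applying the strict inequality above with the fixed $y$ yields $\|x - y\| = \|M_1(M_2(x)) - y\| < \|x - y\|$, which is absurd. Hence $\mathrm{fix}(M_1 \circ M_2) = F_1 \cap F_2$. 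Finally, having identified this fixed-point set, the strict inequality reads exactly as the paracontraction condition for the composition: for every $x \notin \mathrm{fix}(M_1 \circ M_2)$ and every $y \in \mathrm{fix}(M_1 \circ M_2)$ one has $\|(M_1 \circ M_2)(x) - y\| < \|x - y\|$, which together with continuity shows that $M_1 \circ M_2$ is a paracontraction with respect to $\|\cdot\|$. This completes the plan.
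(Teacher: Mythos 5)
Your proof is correct, but note that the paper contains no proof of this lemma to compare against: it is imported verbatim by citation from \cite{Fullmer2018} (Prop.~1), and is then used as a black box in the proofs of Lemma~\ref{lem: payoff allocation fixed-points} and Lemma~\ref{lemma: fix points}. Your argument is essentially the standard one from that reference: chain the two inequalities $\|M_1(M_2(x))-y\| \leq \|M_2(x)-y\| \leq \|x-y\|$, valid for every $y \in \mathrm{fix}(M_1)\cap\mathrm{fix}(M_2)$, characterize exactly when the chain is an equality, and read off both conclusions. You correctly identify and resolve the one genuinely delicate point, namely that for $x \notin \mathrm{fix}(M_1)\cap\mathrm{fix}(M_2)$ only one of the two constituent inequalities may be strict: if $x \notin \mathrm{fix}(M_2)$ the second is strict, while if $x \in \mathrm{fix}(M_2)\setminus\mathrm{fix}(M_1)$ then $M_2(x)=x \notin \mathrm{fix}(M_1)$ forces the first to be strict, so strictness of the composite bound holds in both cases. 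Two minor observations: (i) although you ``fix'' $y$ at the outset, nothing in the chain depends on the particular choice, so the final quantification over all $y \in \mathrm{fix}(M_1 \circ M_2)$ in the paracontraction condition is legitimate, but it would be cleaner to say explicitly that $y$ is arbitrary in $F_1 \cap F_2$; (ii) your contradiction step for the reverse inclusion $\mathrm{fix}(M_1\circ M_2) \subseteq F_1 \cap F_2$ is precisely where the hypothesis $F_1 \cap F_2 \neq \varnothing$ is indispensable---without it the composition can have fixed points outside the (empty) intersection, e.g., the composition of projections onto two disjoint parallel lines in $\mathbb{R}^2$ fixes an entire line---so it is worth flagging that this inclusion, unlike the forward one, genuinely consumes the nonemptiness assumption.
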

The Lemmas provided above are convenient operator-theoretic tools that help us in keeping our proofs elegantly brief. Using these tools, let us prove the following Lemma which we exploit later in the proof of Theorem \ref{thm: payoff allocation}.

\begin{lem}\label{lem: payoff allocation fixed-points}
Let $\boldsymbol{T}_1, \ldots, \boldsymbol{T}_q$ be a set of nonexpansive operators with  $\bigcap_{r=1}^q \mathrm{fix}(\boldsymbol{T}_r) = C$. Let the composition of the adjacency matrices that satisfy Assumption \ref{asm: graph}, i.e., $W_q  W_{q-1} \cdots  W_1 $ represent a strongly connected graph. Let $\boldsymbol{W}_r := W_{r} \otimes I_N $. Then, $ \bigcap_{r=1}^q \mathrm{fix}(\boldsymbol{T}_r \boldsymbol{W}_r) = \mathcal{A} \cap C$, where $\mathcal{A}$ is the consensus set in (\ref{eq: consensus}). $\hfill \square$
\end{lem}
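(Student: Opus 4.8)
The plan is to decompose each operator $\boldsymbol{T}_r\boldsymbol{W}_r$ into the intersection of the fixed-point sets of its two factors, intersect over $r$, and invoke the connectivity hypothesis only at the very end to identify the ``network'' part of the intersection with $\mathcal{A}$. First I would use Lemma \ref{lem: Doubly stochastic matrix}: since each $W_r$ satisfies Assumption \ref{asm: graph}, the lifted operator $\boldsymbol{W}_r=W_r\otimes I_N$ is a paracontraction with respect to $\|\cdot\|_{2,2}$, whereas each $\boldsymbol{T}_r$ is merely nonexpansive. Because a consensus vector $\mathrm{col}(\boldsymbol{y},\dots,\boldsymbol{y})$ is fixed by any row-stochastic lift, $\mathcal{A}\subseteq\mathrm{fix}(\boldsymbol{W}_r)$; combined with Assumption \ref{asm: nonempty robust core} (which supplies $\mathcal{A}\cap C\neq\varnothing$, e.g.\ the diagonal embedding of a robust-core payoff) this gives $\mathrm{fix}(\boldsymbol{T}_r)\cap\mathrm{fix}(\boldsymbol{W}_r)\neq\varnothing$. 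Hence Lemma \ref{prop:Composition of nonexpansive operators } applies to each composition $\boldsymbol{T}_r\boldsymbol{W}_r=\boldsymbol{T}_r\circ\boldsymbol{W}_r$, with $\boldsymbol{W}_r$ as the paracontracting factor, yielding $\mathrm{fix}(\boldsymbol{T}_r\boldsymbol{W}_r)=\mathrm{fix}(\boldsymbol{T}_r)\cap\mathrm{fix}(\boldsymbol{W}_r)$ for every $r$.

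Intersecting these identities over $r$ and using $\bigcap_r\mathrm{fix}(\boldsymbol{T}_r)=C$, I would obtain
\begin{equation*}
\bigcap_{r=1}^{q}\mathrm{fix}(\boldsymbol{T}_r\boldsymbol{W}_r)=\Big(\bigcap_{r=1}^{q}\mathrm{fix}(\boldsymbol{T}_r)\Big)\cap\Big(\bigcap_{r=1}^{q}\mathrm{fix}(\boldsymbol{W}_r)\Big)=C\cap\bigcap_{r=1}^{q}\mathrm{fix}(\boldsymbol{W}_r),
\end{equation*}
so the entire claim reduces to showing $\bigcap_{r=1}^{q}\mathrm{fix}(\boldsymbol{W}_r)=\mathcal{A}$.

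The main obstacle is precisely this last identity, which is where the connectivity hypothesis must be spent. The inclusion $\mathcal{A}\subseteq\bigcap_r\mathrm{fix}(\boldsymbol{W}_r)$ is immediate from the previous paragraph. For the reverse, any $\boldsymbol{x}$ fixed by every $\boldsymbol{W}_r$ is fixed by their composition, so $\boldsymbol{x}\in\mathrm{fix}\big((W_q\cdots W_1)\otimes I_N\big)$. Products of doubly stochastic matrices with positive diagonals remain doubly stochastic with positive diagonal, and by hypothesis $W_q\cdots W_1$ represents a strongly connected graph; hence this product is irreducible, and the Perron--Frobenius theorem makes $1$ a simple eigenvalue with eigenvector $\mathbf{1}$. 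Consequently $\ker\big((W_q\cdots W_1)\otimes I_N-I_{N^2}\big)=\mathrm{span}(\mathbf{1})\otimes\mathbb{R}^N$, i.e.\ $\mathrm{fix}\big((W_q\cdots W_1)\otimes I_N\big)=\mathcal{A}$. Chaining $\mathcal{A}\subseteq\bigcap_r\mathrm{fix}(\boldsymbol{W}_r)\subseteq\mathrm{fix}\big((W_q\cdots W_1)\otimes I_N\big)=\mathcal{A}$ forces equality throughout.

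Substituting $\bigcap_r\mathrm{fix}(\boldsymbol{W}_r)=\mathcal{A}$ back into the displayed equation gives $\bigcap_{r=1}^{q}\mathrm{fix}(\boldsymbol{T}_r\boldsymbol{W}_r)=\mathcal{A}\cap C$, as claimed. The only delicate points to get right are translating ``strongly connected product'' into one-dimensionality of the eigenvalue-$1$ eigenspace (irreducibility alone suffices for this, while the positive diagonals additionally guarantee aperiodicity and a genuinely irreducible product), and verifying that the nonemptiness needed to invoke Lemma \ref{prop:Composition of nonexpansive operators } is exactly what Assumption \ref{asm: nonempty robust core} provides.
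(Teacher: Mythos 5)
Your proof is correct, and its skeleton matches the paper's: decompose $\mathrm{fix}(\boldsymbol{T}_r\boldsymbol{W}_r)=\mathrm{fix}(\boldsymbol{T}_r)\cap\mathrm{fix}(\boldsymbol{W}_r)$ via Lemmas \ref{lem: Doubly stochastic matrix} and \ref{prop:Composition of nonexpansive operators }, reduce the claim to $\bigcap_{r=1}^q\mathrm{fix}(\boldsymbol{W}_r)=\mathcal{A}$, and finish with Perron--Frobenius. You diverge in exactly one step: the paper establishes $\bigcap_{r=1}^q\mathrm{fix}(\boldsymbol{W}_r)=\mathrm{fix}(\boldsymbol{W}_q\cdots\boldsymbol{W}_1)$ as an equality by invoking the paracontraction-composition result (Lemma \ref{prop:Composition of paracontracting operators }) and only then applies Perron--Frobenius to the product, whereas you bypass that lemma entirely with the sandwich $\mathcal{A}\subseteq\bigcap_{r}\mathrm{fix}(\boldsymbol{W}_r)\subseteq\mathrm{fix}\bigl((W_q\cdots W_1)\otimes I_N\bigr)=\mathcal{A}$, using only row-stochasticity for the first inclusion and the trivial fact that a common fixed point is fixed by the composition for the second; this is more elementary, since the paracontraction property of the $\boldsymbol{W}_r$ is then needed only to invoke Lemma \ref{prop:Composition of nonexpansive operators }. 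Your remark that irreducibility alone pins $\ker\bigl((W_q\cdots W_1)\otimes I_N - I_{N^2}\bigr)$ to $\mathcal{A}$ is also right: eigenvalues on the unit circle other than $1$ contribute no fixed points, so aperiodicity is not needed for this identification. A second point in your favor is that you explicitly verify the nonemptiness hypothesis $\mathrm{fix}(\boldsymbol{T}_r)\cap\mathrm{fix}(\boldsymbol{W}_r)\neq\varnothing$ of Lemma \ref{prop:Composition of nonexpansive operators } via Assumption \ref{asm: nonempty robust core} (a consensus embedding of a robust-core point), a condition the paper's proof uses silently even though it is essential --- without it, $\mathrm{fix}(\boldsymbol{T}\circ\boldsymbol{W})=\mathrm{fix}(\boldsymbol{T})\cap\mathrm{fix}(\boldsymbol{W})$ can fail (composing projections onto two disjoint parallel hyperplanes yields a map with many fixed points despite the empty intersection). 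Strictly speaking, the lemma as stated abstracts $C$ away from the robust core, so the nonemptiness should be read as an implicit hypothesis on $\mathcal{A}\cap C$; your instinct to source it from Assumption \ref{asm: nonempty robust core} is exactly how the lemma is deployed in the proof of Theorem \ref{thm: payoff allocation}.
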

\begin{proof} 
By Lemmas \ref{lem: Doubly stochastic matrix} and \ref{prop:Composition of nonexpansive operators }, $\mathrm{fix}(\boldsymbol{T}_r\boldsymbol{W}_r) = \mathrm{fix}(\boldsymbol{T}_r) \cap \mathrm{fix}(\boldsymbol{W}_r)$ hence, $\bigcap_{r=1}^q \mathrm{fix}(\boldsymbol{T}_r \boldsymbol{W}_r) = \mathrm{fix}(\boldsymbol{T}_r) \cap \mathrm{fix}(\boldsymbol{W}_r) \cap \cdots \cap \mathrm{fix}(\boldsymbol{T}_1) \cap \mathrm{fix}(\boldsymbol{W}_1) $. By Lemmas \ref{lem: Doubly stochastic matrix} and \ref{prop:Composition of paracontracting operators }, $\bigcap_{r=1}^q \mathrm{fix}(\boldsymbol{W}_r) = \mathrm{fix}(\boldsymbol{W}_q \cdots \boldsymbol{W}_1)$ where, by the Perron-Frobenius theorem, $\mathrm{fix}(\boldsymbol{W}_q \cdots \boldsymbol{W}_1) = \mathcal{A}$. Since $ 
\bigcap_{r=1}^q \mathrm{fix}(\boldsymbol{T}_r) = C$,  we conclude that $\bigcap_{r=1}^q \mathrm{fix}(\boldsymbol{T}_r \boldsymbol{W}_r) = \mathcal{A} \cap C$.
\end{proof}
Given these results, we are now ready to prove Theorem \ref{thm: payoff allocation}.
\begin{proof}(Theorem \ref{thm: payoff allocation}).
Let us define the operator $ \boldsymbol{U}_f:= \boldsymbol{T}_f\boldsymbol{W}_f $ with $\boldsymbol{T}_f \in \mathcal{T}$ and $W_f \in \mathcal{W}$, where $\boldsymbol{W}_f := W_{f} \otimes I_N $. We note that, by Assumptions \ref{asm: fixed graph} and \ref{asm: fixed points of T} there are only finitely many such operators and therefore, we can define the operator family  $ \mathcal{U}: = \{\boldsymbol{U}_f\}_{f=1}^F$. Let $l:\mathcal{U} \to \mathbb{N}$ be a function such that $l(\boldsymbol{U}_f)$ gives the maximal length of the sequence which contains the operator $\boldsymbol{U}_f$. Furthermore, let $\mathcal{D} = \{f \mid l(\boldsymbol{U}_f) < \infty \} \subseteq \{1, \ldots F \}$, i.e., the set of indices of the operators that occur at least once in a finite length interval. Since $F$ is finite, there always exist an integer representing the length of sequences in which each index $f \in \mathcal{D}$ appears at least once, thereby fulfilling the admissibility condition in Definition \ref{def: admissible sequence}. Thus, by Lemmas \ref{lem: browder} and \ref{prop:Composition of nonexpansive operators }, the iteration in (\ref{main_it_allocation}) converges to some $\bar{\boldsymbol{x}} \in \bigcap_{f \in \mathcal{D}} \mathrm{fix}(\boldsymbol{U}_f)$. \\
Let $\mathcal{K}_L$ be the interval of a sequence containing $L$ consecutive operators from the family $\{\boldsymbol{U}_f\}_{f \in \mathcal{D}}$ such that $ \bigcap_{k \in \mathcal{K}_L} \mathrm{fix}(\boldsymbol{U}^{k}) = \bigcap_{f \in \mathcal{D}} \mathrm{fix}(\boldsymbol{U}_f)$. As we can choose an arbitrarily long interval, without loss of generality, let $L \geq Q$, with $Q$ being the integer in Assumptions \ref{asm: Q-con } and \ref{asm: Q admissible}. Then, it holds that $\bigcap_{k \in \mathcal{K}_L} \mathrm{fix}(\boldsymbol{U}^{k}) \subseteq \bigcap_{k \in \mathcal{K}_Q} \mathrm{fix}(\boldsymbol{U}^{k})$ because, having a longer interval of operators can either reduce the intersection set or leave it unchanged. Finally, by Lemma \ref{lem: payoff allocation fixed-points}, $\bigcap_{k \in \mathcal{K}_Q} \mathrm{fix}(\boldsymbol{U}^{k}) = \mathcal{A}\cap \mathcal{C}_0^N$.
\end{proof}
\begin{figure}[t]
\centering
\includegraphics[width=7 cm]{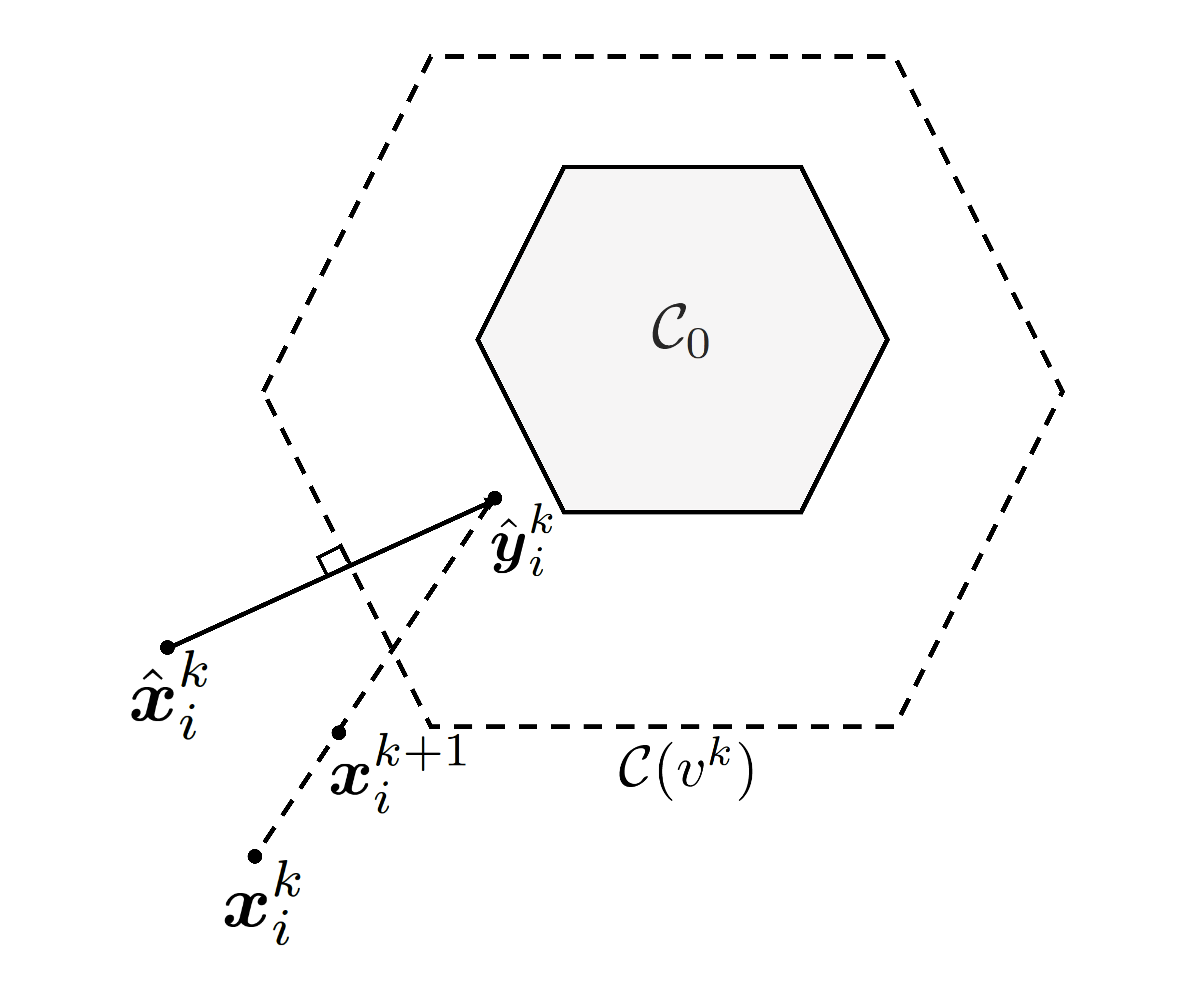}
\caption{Illustration of the payoff allocation proposed by an agent $i$, as in (\ref{eq: illustartion proposal}) where $\hat{\boldsymbol{y}}_i^k := \mathrm{overproj}_{\mathcal{C}(v^{k})} \hat{\boldsymbol{x}}_i^{k}$.}
\label{fig: illustartion proposal}
\vspace{-3.5mm}
\end{figure}
\vspace{-2mm}
\subsection{Discussion}\label{subsec: discussion payoff}
Let us now visualize a proposal of an agent $i$ in (\ref{main_it_allocation}) by employing an over-projection operator, i.e.,  $T_i^{k} = \mathrm{overproj}_{\mathcal{C}(v^{k})}$ which is a nonexpansive operator, see [\cite{bauschke2011convex}, Prop. 4.2].
For brevity, let $\hat{\boldsymbol{x}}_i^{k} := \sum_{j=1}^{N} w_{i,j}^k \boldsymbol{x}_{j}^{k}$. Then, the proposal of an agent $i$ reads as:
\begin{equation}\label{eq: illustartion proposal}
  \boldsymbol{x}_i^{k+1}=(1-\alpha_k) \boldsymbol{x}_i^{k}+\alpha_k \mathrm{overproj}_{\mathcal{C}(v^{k})} \hat{\boldsymbol{x}}_i^{k},
\end{equation}
where $\alpha_k \in [\epsilon, 1 - \epsilon] \text{ for some } \epsilon \in (0,1/2]$.\\ 
In Figure \ref{fig: illustartion proposal}, we illustrate an arbitrary instance of (\ref{eq: illustartion proposal}), where the proposed payoff allocation $\boldsymbol{x}_i^{k+1}$ does not belong to the instantaneous core $\mathcal{C}(v^{k})$ and hence it is not an acceptable payoff, even for agent $i$. Nevertheless, as stated in Theorem \ref{thm: payoff allocation}, repeated payoff allocations by all agents will eventually reach an agreement on the payoff that belongs to the robust core $\mathcal{C}_0$ in (\ref{eq: robust core}). Note that, in a payoff allocation process, intermediate allocation proposals can be irrational and therefore, the adoption of this process by a rational agent shall be motivated, e.g. via \textit{mechanism design}, where a central authority provides incentives to encourage cooperative behavior among agents and in turn drives the process towards the desired equilibrium.\\
{We remark that the number of possible coalitions grows exponentially in $N$, i.e., $2^N$, hence so does the computations required to evaluate the core by an individual agent. We also note that this feature is inherent in the class of coalitional games and in fact it is shared across the literature \cite{Nedic2013}, \cite{Bauso2015}. }
\section{Distributed bargaining protocol} \label{sec: bargaining}
{In this section, we propose a bargaining protocol under a typical negotiation framework and a distributed paradigm, similar to the payoff allocation in Section \ref{sec: payoff allocation}}. Specifically, at iteration $k$, each agent $i \in \mathcal{I}$ proposes a payoff distribution that belongs to its negotiation set, referred to as the bounding set $\mathcal{X}_i^k$ in  (\ref{eq: bounding set}). The intersection of negotiation sets represents the set of all plausible deals, i.e., the core and mutual agreement of agents on one such deal concludes the bargaining process. This struck deal corresponds to the final payoff distribution.
\subsection{Distributed bargaining algorithm}\label{subsec:Distributed Bargaining Algorithm }
For our distributed bargaining protocol, we use a similar setup as the payoff allocation algorithm (\ref{subsec: Distributed Payoff Allocation Algorithm}). Briefly, we consider a set of agents $\mathcal{I} = \{1, \ldots, N\}$, each of whom proposes a payoff distribution $ \boldsymbol{x}_i^{k} \in \mathbb{R}^{N}$ at each iteration $k \in \mathbb{N}$.   These agents communicate over a sequence of time-varying network graphs $(G^k)_{k \in \mathbb{N}}$, that satisfies Assumption \ref{asm: Q-con }, and the corresponding adjacency matrices $(W^{k})_{k \in \mathbb{N}}$ satisfy Assumptions \ref{asm: graph} and \ref{asm: fixed graph}.\\
During the negotiation, at each iteration $k$, an agent $i$ first takes an average of the estimates of neighboring agents $\boldsymbol{x}_j^k, j \in \mathcal{N}_i^k$, weighted by an adjacency matrix $W^k$, and then applies an operator $M_i^k$ on the resulting average. 
Specifically, we propose the following negotiation protocol for each agent $i \in \mathcal{I}$:
\begin{equation*}
  \boldsymbol{x}_i^{k+1}= M_i^{k}\textstyle \left(\sum_{j=1}^{N} w^k_{i,j} \boldsymbol{x}_{j}^{k}\right), 
\end{equation*}
that is, in collective compact form,
 \begin{equation}\label{main_it}
    { \boldsymbol{x}}^{k+1} = \boldsymbol{M}^k (\boldsymbol{W}^k { \boldsymbol{x}}^{k}),
\end{equation}
 where  $\boldsymbol{M}^k (\boldsymbol{x}):= \mathrm{col}(M_1^{k}(\boldsymbol{x}_1), \ldots,  M_N^{k}(\boldsymbol{x}_N))  $ and $\boldsymbol{W}^k := W^{k} \otimes I_N $ represents an adjacency matrix. \\
In (\ref{main_it}) we require the operator $M_i^k$ to be paracontraction, not necessarily a nonexpansive operator as in (\ref{main_it_allocation}). Utilizing a paracontraction operator allows us to prove convergence of our bargaining algorithm without the need of $\alpha-$averaging with the inertial term $\boldsymbol{x}^k$, as required for payoff allocation in (\ref{main_it_allocation}). Furthermore, in (\ref{main_it}), we also require the fixed-point set of $M_i^{k}$ to be the bounding set in (\ref{eq: bounding set}), i.e., $\mathrm{fix}(M_i^k) = \mathcal{X}_i^k$. Therefore, $ \mathrm{fix}(\boldsymbol{M}^k) = \bigcap_{i=1}^N \mathcal{X}_i^{k} = \mathcal{C}(v^{k}) $ and for a robust coalitional game $(\mathcal{I}, \mathcal{V})$, it holds that $ \bigcap_{v^k \in \mathcal{V}} \mathcal{C}(v^k) = \mathcal{C}_0$.
 \begin{assum}[Paracontractions]\label{asm: fixed points of M}
 For all $k \in \mathbb{N}$, $\boldsymbol{M}^k$ in (\ref{main_it}) is such that $\boldsymbol{M}^k \in \mathcal{M}$, where $\mathcal{M}$ is a finite family of paracontraction operators such that $\bigcap_{\boldsymbol{M} \in \mathcal{M}} \mathrm{fix}(\boldsymbol{M}) = \mathcal{C}^N_0$ with $\mathcal{C}_0$ being the robust core in (\ref{eq: robust core}). $\hfill \square$
 \end{assum}
Similar to the payoff allocation setup, we also assume that each $\boldsymbol{M}^k \in \mathcal{M}$ appears at least once in every $Q$ iterations of (\ref{main_it}), with $Q$ being the integer in Assumption \ref{asm: Q-con }.
\begin{assum}\label{asm: Q admissible bargaining} 
Let $Q$ be the integer in Assumption \ref{asm: Q-con }. The operators $(\boldsymbol{M}^k)_{k \in \mathbb{N}}$ in (\ref{main_it}) are such that, for all $n \in \mathbb{N}$, $\bigcup_{k=n}^{n+Q}\{\boldsymbol{M}^k\} = \mathcal{M}$, with $\mathcal{M}$ as in Assumption \ref{asm: fixed points of M}. $\hfill \square$
\end{assum}
Next, we formalize the main convergence result of the bargaining protocol in (\ref{main_it}).
\begin{theorem}[Convergence of bargaining protocol]\label{theorem: main}
Let Assumptions \ref{asm: nonempty robust core}$-$\ref{asm: fixed graph}, \ref{asm: fixed points of M}$-$\ref{asm: Q admissible bargaining} hold. Then, {starting from any $\boldsymbol{x}^0 \in \mathbb{R}^{N^2}$,} the sequence \((\boldsymbol{x}^{k})_{k=0}^{\infty}\) generated by the iteration in (\ref{main_it}) converges to {some} $\bar{\boldsymbol{x}} \in \mathcal{A}\cap \mathcal{C}_0^N$, with $\mathcal{A}$ as in (\ref{eq: consensus}) and $\mathcal{C}_0$ being the robust core (\ref{eq: robust core}). $\hfill \square$
\end{theorem}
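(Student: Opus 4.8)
The plan is to follow the template of the proof of Theorem~\ref{thm: payoff allocation}, but to replace the nonexpansive machinery built on Browder's Lemma~\ref{lem: browder} with its paracontraction counterpart, since the bargaining update~(\ref{main_it}) carries no $\alpha$-averaging and Lemma~\ref{lem: browder} is therefore not directly applicable. First I would lump each step's maps into a single composite operator $\boldsymbol{U}_f := \boldsymbol{M}_f\boldsymbol{W}_f$, where $\boldsymbol{M}_f \in \mathcal{M}$ and $\boldsymbol{W}_f := W_f \otimes I_N$ with $W_f \in \mathcal{W}$. By Assumptions~\ref{asm: fixed graph} and~\ref{asm: fixed points of M} there are only finitely many such operators, so they constitute a finite family $\mathcal{U} := \{\boldsymbol{U}_f\}_{f=1}^F$ and~(\ref{main_it}) becomes $\boldsymbol{x}^{k+1} = \boldsymbol{U}^k(\boldsymbol{x}^k)$ with $\boldsymbol{U}^k \in \mathcal{U}$.

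Next I would verify that every $\boldsymbol{U}_f$ is a paracontraction with respect to the mixed norm $\|\cdot\|_{2,2}$, which here coincides with the Euclidean norm on $\mathbb{R}^{N^2}$. By Lemma~\ref{lem: Doubly stochastic matrix}, under Assumption~\ref{asm: graph} each $\boldsymbol{W}_f$ is a paracontraction with respect to $\|\cdot\|_{2,2}$, and by Assumption~\ref{asm: fixed points of M} each $\boldsymbol{M}_f$ is a paracontraction (with respect to the same norm). Since $\mathcal{C}_0 \neq \varnothing$ by Assumption~\ref{asm: nonempty robust core}, any payoff in $\mathcal{C}_0$ repeated into a consensual vector lies in $\mathrm{fix}(\boldsymbol{M}_f) \cap \mathrm{fix}(\boldsymbol{W}_f)$, so this intersection is nonempty and Lemma~\ref{prop:Composition of paracontracting operators } yields that $\boldsymbol{U}_f$ is a paracontraction with $\mathrm{fix}(\boldsymbol{U}_f) = \mathrm{fix}(\boldsymbol{M}_f) \cap \mathrm{fix}(\boldsymbol{W}_f)$.

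I would then invoke the convergence theorem for sequential products of finitely many paracontractions with a common fixed point, applied in an admissible order (the paracontraction analogue of Browder's result, due to Elsner--Koltracht--Neumann and used in~\cite{Fullmer2018}). Letting $\mathcal{D}$ index the operators of $\mathcal{U}$ that recur within bounded-length windows, Assumptions~\ref{asm: Q-con } and~\ref{asm: Q admissible bargaining} guarantee admissibility of the index sequence; together with nonemptiness of the common fixed-point set (from Assumption~\ref{asm: nonempty robust core}), this gives $\boldsymbol{x}^k \to \bar{\boldsymbol{x}} \in \bigcap_{f \in \mathcal{D}} \mathrm{fix}(\boldsymbol{U}_f)$ from any initial $\boldsymbol{x}^0$. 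Finally I would identify this limit set exactly as in Lemma~\ref{lem: payoff allocation fixed-points}, running that argument verbatim but with Lemma~\ref{prop:Composition of paracontracting operators } in place of Lemma~\ref{prop:Composition of nonexpansive operators }: over any window of length at least $Q$ the composed adjacency matrices represent a strongly connected graph, so $\bigcap \mathrm{fix}(\boldsymbol{W}_r) = \mathcal{A}$ by the Perron--Frobenius theorem, while $\bigcap_{\boldsymbol{M} \in \mathcal{M}} \mathrm{fix}(\boldsymbol{M}) = \mathcal{C}_0^N$ by Assumption~\ref{asm: fixed points of M}; hence $\bigcap_{f \in \mathcal{D}} \mathrm{fix}(\boldsymbol{U}_f) = \mathcal{A} \cap \mathcal{C}_0^N$ and $\bar{\boldsymbol{x}} \in \mathcal{A} \cap \mathcal{C}_0^N$, as claimed.

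I expect the crux of the argument to be the third step: unlike Theorem~\ref{thm: payoff allocation}, where Browder's Lemma~\ref{lem: browder} is directly available, here I must rely on the strict contraction toward common fixed points supplied by the paracontraction property, and I must make precise that the relevant convergence theorem holds for admissible (not merely infinitely-often-occurring) index sequences. A secondary but essential point is norm compatibility: Lemma~\ref{prop:Composition of paracontracting operators } requires all composed paracontractions to be defined with respect to one and the same norm, so I must confirm that the operators $\boldsymbol{M}_f$ of Assumption~\ref{asm: fixed points of M} are paracontractions precisely with respect to $\|\cdot\|_{2,2}$, the norm for which Lemma~\ref{lem: Doubly stochastic matrix} gives the paracontraction property of $\boldsymbol{W}_f$.
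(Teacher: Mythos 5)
Your proposal is correct and arrives at Theorem~\ref{theorem: main} by the same three ingredients as the paper (Lemmas~\ref{lem: Doubly stochastic matrix}, \ref{prop:Composition of paracontracting operators } and the Elsner et al.\ result, Lemma~\ref{lemma: finite family}), but via a genuinely different decomposition. The paper blocks the iteration into windows of length $Q$: it passes to the subsequence $\boldsymbol{z}^t = \boldsymbol{x}^{(t-1)Q}$, shows via Lemma~\ref{lemma: fix points} that each $Q$-fold composition $\boldsymbol{M}^{tQ-1}\boldsymbol{W}^{tQ-1}\circ\cdots\circ\boldsymbol{M}^{(t-1)Q}\boldsymbol{W}^{(t-1)Q}$ is a paracontraction with fixed-point set \emph{exactly} $\mathcal{A}\cap\mathcal{C}_0^N$ (Assumptions~\ref{asm: Q-con } and \ref{asm: Q admissible bargaining} are consumed inside that lemma), and then invokes Lemma~\ref{lemma: finite family} on the finitely many block maps, whose ``infinitely often'' clause becomes vacuous since all blocks share the same fixed set. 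You instead keep single-step composites $\boldsymbol{U}_f=\boldsymbol{M}_f\boldsymbol{W}_f$, apply Lemma~\ref{lemma: finite family} to the original sequence, and identify the limit set afterwards by transporting the skeleton of Theorem~\ref{thm: payoff allocation}'s proof with Lemma~\ref{lem: payoff allocation fixed-points}. Your route buys convergence of the \emph{whole} sequence $(\boldsymbol{x}^k)$ directly, whereas the paper formally proves only convergence of the subsequence $\boldsymbol{z}^t$ (extending to the full sequence, e.g.\ via Fej\'er monotonicity of each step toward $\bar{\boldsymbol{z}}\in\mathcal{A}\cap\mathcal{C}_0^N\subseteq\mathrm{fix}(\boldsymbol{M}^k\boldsymbol{W}^k)$, is left implicit there); the paper's blocking buys a cleaner invocation of Lemma~\ref{lemma: finite family}, never reasoning about which single-step pairs $(\boldsymbol{M}_f,W_f)$ recur. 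One correction to your third step: Lemma~\ref{lemma: finite family} requires no admissibility of the index sequence --- that hypothesis belongs to Browder's Lemma~\ref{lem: browder}; Elsner et al.\ need only a finite family with nonempty common fixed set, and they localize the limit in the fixed sets of the \emph{infinitely-often} operators. Accordingly, your $\mathcal{D}$ should be the indices occurring infinitely often rather than ``within bounded windows'' (a pair may recur i.o.\ with growing gaps); the identification still goes through, since Assumption~\ref{asm: Q admissible bargaining} forces every $\boldsymbol{M}\in\mathcal{M}$ to occur i.o., and beyond some finite time every $Q$-window of Assumption~\ref{asm: Q-con } contains only i.o.\ matrices, so the i.o.\ $W$'s are jointly strongly connected and the intersection of their fixed sets is $\mathcal{A}$. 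Your norm-compatibility caveat is well taken, and in fact applies equally to the paper's Lemma~\ref{lemma: fix points}, which likewise implicitly takes the operators in $\mathcal{M}$ to be paracontractions with respect to $\|\cdot\|_{2,2}$.
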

\subsection{Convergence Analysis}
We prove the convergence of the bargaining protocol in (\ref{main_it}) by building upon a result related to the time-varying paracontractions, presented in \cite{Elsner1992}.
\begin{lem}[\cite{Elsner1992}, Thm. 1]\label{lemma: finite family}
Let $\mathcal{M}$ be a finite family of paracontractions such that $\bigcap_{M \in \mathcal{M}} \mathrm{fix}(M)$ $ \neq \varnothing $. Then, the sequence $(\boldsymbol{x}^k)_{k \in \mathbb{N}}$ generated by $\boldsymbol{x}^{k+1} := M^k( \boldsymbol{x}^{k})$ converges to a common fixed-point of the paracontractions that occur infinitely often in the sequence. $\hfill \square$
\end{lem}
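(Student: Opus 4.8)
The plan is to establish a Fejér-type monotonicity with respect to the common fixed-point set, upgrade it to vanishing step displacement through a compactness argument that crucially exploits the finiteness of $\mathcal{M}$, and finally pin down a single limit point that is fixed by every operator applied infinitely often.

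First, write $F:=\bigcap_{M\in\mathcal{M}}\mathrm{fix}(M)$, which is nonempty by hypothesis, and let $\mathcal{M}_\infty\subseteq\mathcal{M}$ collect the operators that occur infinitely often in $(M^k)_{k\in\mathbb{N}}$, with $F_\infty:=\bigcap_{M\in\mathcal{M}_\infty}\mathrm{fix}(M)\supseteq F\neq\varnothing$. Fix any $z\in F_\infty$. After the finite time past which only operators of $\mathcal{M}_\infty$ are applied, each step obeys the paracontraction inequality $\|\boldsymbol{x}^{k+1}-z\|=\|M^k(\boldsymbol{x}^k)-z\|\le\|\boldsymbol{x}^k-z\|$, with equality only if $\boldsymbol{x}^k\in\mathrm{fix}(M^k)$. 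Hence $(\|\boldsymbol{x}^k-z\|)_k$ is eventually nonincreasing and converges to some $d(z)\ge 0$; in particular $(\boldsymbol{x}^k)_k$ is bounded, say contained in a compact set $K$, and the decrements $\delta_k:=\|\boldsymbol{x}^k-z\|-\|\boldsymbol{x}^{k+1}-z\|\ge 0$ are summable, so $\delta_k\to 0$.

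The key quantitative step is a uniform modulus lemma: for each $M\in\mathcal{M}$ and $z\in\mathrm{fix}(M)$ the map $x\mapsto\|x-z\|-\|M(x)-z\|$ is continuous, nonnegative on $K$, and vanishes exactly on $\mathrm{fix}(M)$, while $x\mapsto\|x-M(x)\|$ is continuous and vanishes on the same set; a compactness (contradiction) argument then shows that on $K$ a small distance decrease forces a small displacement, and since $\mathcal{M}$ is finite the modulus can be taken uniform over the family. Combining this with $\delta_k\to 0$ yields $\|\boldsymbol{x}^{k+1}-\boldsymbol{x}^k\|\to 0$. Moreover, along the indices where a fixed $M\in\mathcal{M}_\infty$ is applied, any cluster point $y$ of $(\boldsymbol{x}^k)$ satisfies, by continuity and the squeezing $\|\boldsymbol{x}^k-z\|\to d(z)$, the equality $\|M(y)-z\|=\|y-z\|$, so by strict paracontraction $y\in\mathrm{fix}(M)$. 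It then remains to produce a single limit point $\bar{\boldsymbol{x}}\in F_\infty$. To this end I would partition the tail into consecutive blocks each containing at least one application of every operator of $\mathcal{M}_\infty$; the block map $\Phi_t$ is a composition of paracontractions sharing the common fixed-point set $F_\infty$, so by the composition Lemma (\cite{Fullmer2018}, Prop. 1) applied inductively $\mathrm{fix}(\Phi_t)=F_\infty$ for all $t$. Passing to a convergent subsequence of block-endpoints and combining the vanishing displacement with the strict paracontraction inequality identifies a cluster point $\bar{\boldsymbol{x}}\in F_\infty$. Once such a $\bar{\boldsymbol{x}}\in F_\infty$ is in hand, the Fejér monotonicity applied with $z=\bar{\boldsymbol{x}}$ shows $(\|\boldsymbol{x}^k-\bar{\boldsymbol{x}}\|)_k$ converges, and since $\bar{\boldsymbol{x}}$ is a cluster point its limit is $0$; hence the whole sequence converges to $\bar{\boldsymbol{x}}$. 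Finally, for every $M\in\mathcal{M}_\infty$, evaluating along the indices where $M$ is applied gives $\boldsymbol{x}^k\to\bar{\boldsymbol{x}}$ and $M(\boldsymbol{x}^k)=\boldsymbol{x}^{k+1}\to\bar{\boldsymbol{x}}$, whence $M(\bar{\boldsymbol{x}})=\bar{\boldsymbol{x}}$ by continuity.

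The main obstacle is exactly this last extraction: unlike the admissibility hypothesis invoked for the payoff-allocation and bargaining theorems, here the gaps between successive applications of a given operator may be arbitrarily long, so no fixed forward window can be guaranteed to meet every operator, and one cannot naively conclude that an \emph{arbitrary} cluster point is fixed by all operators of $\mathcal{M}_\infty$. The finiteness of $\mathcal{M}$ is what rescues the argument: it is needed both for the uniform displacement modulus and for the block construction whose compositions share the single fixed-point set $F_\infty$, and it is the interplay of these two facts that lets one certify a single common-fixed-point cluster point and thereby close the argument.
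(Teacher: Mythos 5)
You should first note a point of reference: the paper does not prove this lemma at all — it is imported verbatim from Elsner, Koltracht and Neumann \cite{Elsner1992}, so your attempt must be judged against that source's argument rather than anything in the manuscript. Much of your outline is correct and close in spirit to theirs: Fej\'er monotonicity of $\|\boldsymbol{x}^k-z\|$ for $z$ in the common fixed-point set, boundedness, the equidistance of all cluster points from any such $z$, and the observation that a cluster point taken \emph{along the indices where a fixed $M$ is applied} lies in $\mathrm{fix}(M)$ are all sound. Your uniform-modulus step giving $\|\boldsymbol{x}^{k+1}-\boldsymbol{x}^k\|\to 0$ (via compactness and finiteness of $\mathcal{M}$) is also correct, though, as it turns out, not needed.

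The genuine gap is exactly where you place the weight: the extraction of a single cluster point $\bar{\boldsymbol{x}}\in F_\infty$ via block endpoints does not go through. Since the gaps between occurrences of a given operator can grow without bound, the blocks have unbounded length, so (a) the block maps $\Phi_t$ form an \emph{infinite} family of distinct paracontractions — no pigeonhole gives a constant block map, and invoking any finite-family convergence theorem at this point would be circular, since that is the lemma being proved; (b) paracontractions are not nonexpansive in general, so the $\Phi_t$ need not be equicontinuous and no limit map can be extracted; and (c) $\|\boldsymbol{x}^{k+1}-\boldsymbol{x}^k\|\to 0$ controls one step, not the drift across an unboundedly long block, so convergence of block endpoints says nothing about the state at which a given $M$ is applied inside the block. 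The sentence ``combining the vanishing displacement with the strict paracontraction inequality identifies a cluster point $\bar{\boldsymbol{x}}\in F_\infty$'' is precisely the unproved step. The missing idea (this is how \cite{Elsner1992} closes the argument) is a \emph{first-exit-time} construction: take any cluster point $y$ with $\boldsymbol{x}^{k_j}\to y$, let $S\subseteq\mathcal{M}_\infty$ be the operators fixing $y$, and suppose $S\neq\mathcal{M}_\infty$; let $r_j\geq k_j$ be the first index at which an operator outside $S$ is applied. Every operator applied on $[k_j,r_j)$ fixes $y$, so by paracontraction $\|\boldsymbol{x}^l-y\|$ is nonincreasing there and hence $\boldsymbol{x}^{r_j}\to y$ regardless of how long the stretch is; pigeonholing a constant $R\notin S$ applied at infinitely many $r_j$ gives $\boldsymbol{x}^{r_j+1}\to R(y)$, and since all cluster points are equidistant from any $z$ in the common fixed-point set, strict paracontraction forces $R(y)=y$, a contradiction. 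Thus \emph{every} cluster point lies in $F_\infty$, and your final Fej\'er step then correctly yields convergence of the whole sequence. With this repair your modulus machinery becomes superfluous; without it, the proof is incomplete at its decisive step.
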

In the following lemma, we provide a technical result about the composition of paracontractions which we exploit later in the proof of Theorem \ref{theorem: main}.

\begin{lem}\label{lemma: fix points}
Let $Q$ be the integer in Assumption \ref{asm: Q-con }. Let  $\boldsymbol{M}_1, \ldots, \boldsymbol{M}_Q $ be paracontraction operators with  $  
\bigcap_{r=1}^Q \mathrm{fix}(\boldsymbol{M}_r) =: C$ and let $W_Q  W_{Q-1} \cdots  W_1 $  be the composition of the adjacency matrices where $W_r \in \mathcal{W}$, with $\mathcal{W}$ as in Assumption \ref{asm: fixed graph}. Let $\boldsymbol{W}_r := W_r \otimes I_N $. Then, the composed mapping $ \boldsymbol{x} \mapsto (\boldsymbol{M}_Q\boldsymbol{W}_Q  \circ \cdots \circ  \boldsymbol{M}_1 \boldsymbol{W}_1)(\boldsymbol{x})$ 
\begin{enumerate}[(i)]
    \item is a paracontraction with respect to norm $\| \cdot \|_{2,2}$;
    \item $\mathrm{fix}(\boldsymbol{M}_Q\boldsymbol{W}_Q  \circ \cdots \circ  \boldsymbol{M}_1 \boldsymbol{W}_1) = \mathcal{A} \cap C$,  
\end{enumerate}
where $\mathcal{A}$ is the consensus set in (\ref{eq: consensus}). $\hfill \square$
\end{lem}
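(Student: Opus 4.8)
The plan is to build the $Q$-fold interleaved composition one factor at a time and invoke Lemma \ref{prop:Composition of paracontracting operators }, which for two paracontractions delivers both the paracontraction property and the fixed-point identity in a single step; iterating it will give (i) and (ii) together.

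First I would record the building blocks. By Lemma \ref{lem: Doubly stochastic matrix}, since each $W_r \in \mathcal{W}$ satisfies Assumption \ref{asm: graph}, every linear operator $\boldsymbol{W}_r = W_r \otimes I_N$ is a paracontraction with respect to $\|\cdot\|_{2,2}$; the $\boldsymbol{M}_r$ are paracontractions with respect to the same mixed norm (as in the bargaining setup). Moreover, every doubly stochastic $W_r$ fixes the consensus directions, so $\mathcal{A} \subseteq \mathrm{fix}(\boldsymbol{W}_r)$ for each $r$, while $C \subseteq \mathrm{fix}(\boldsymbol{M}_r)$ for each $r$ by definition of $C$. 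Under Assumption \ref{asm: nonempty robust core} the set $\mathcal{A}\cap C$ is nonempty (it contains the consensus copies of any robust-core point), which guarantees that all the pairwise fixed-point intersections encountered below are nonempty, so that Lemma \ref{prop:Composition of paracontracting operators } applies at each stage.

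Next I would compose iteratively. Applying Lemma \ref{prop:Composition of paracontracting operators } to $\boldsymbol{M}_1$ and $\boldsymbol{W}_1$ (whose fixed-point sets meet, as both contain $\mathcal{A}\cap C$) shows $\boldsymbol{M}_1\boldsymbol{W}_1$ is a paracontraction with $\mathrm{fix}(\boldsymbol{M}_1\boldsymbol{W}_1)=\mathrm{fix}(\boldsymbol{M}_1)\cap\mathrm{fix}(\boldsymbol{W}_1)$. Proceeding by induction, composing the current partial product with $\boldsymbol{W}_{r+1}$ and then $\boldsymbol{M}_{r+1}$ — at each step the relevant intersection is nonempty because it still contains $\mathcal{A}\cap C$ — yields that the full map $\boldsymbol{M}_Q\boldsymbol{W}_Q\circ\cdots\circ\boldsymbol{M}_1\boldsymbol{W}_1$ is a paracontraction with respect to $\|\cdot\|_{2,2}$, proving (i), together with
\[
\mathrm{fix}\big(\boldsymbol{M}_Q\boldsymbol{W}_Q\circ\cdots\circ\boldsymbol{M}_1\boldsymbol{W}_1\big)=\Big(\bigcap_{r=1}^Q\mathrm{fix}(\boldsymbol{M}_r)\Big)\cap\Big(\bigcap_{r=1}^Q\mathrm{fix}(\boldsymbol{W}_r)\Big)=C\cap\bigcap_{r=1}^Q\mathrm{fix}(\boldsymbol{W}_r).
\]

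It then remains to identify $\bigcap_{r=1}^Q\mathrm{fix}(\boldsymbol{W}_r)$ with the consensus set $\mathcal{A}$. Applying Lemma \ref{prop:Composition of paracontracting operators } to the $\boldsymbol{W}_r$ alone (their fixed-point sets all contain $\mathcal{A}$) gives $\bigcap_{r=1}^Q\mathrm{fix}(\boldsymbol{W}_r)=\mathrm{fix}(\boldsymbol{W}_Q\cdots\boldsymbol{W}_1)$. By Assumption \ref{asm: Q-con } the product $W_Q\cdots W_1$ represents a strongly connected graph and, having positive diagonal entries, is a primitive doubly stochastic matrix; hence by the Perron--Frobenius theorem its eigenvalue $1$ is simple with eigenvector $\mathbf{1}$, so $\mathrm{fix}(\boldsymbol{W}_Q\cdots\boldsymbol{W}_1)=\mathcal{A}$. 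Substituting back proves (ii). The main obstacle is precisely this last step: a single $\boldsymbol{W}_r$ need not have $\mathcal{A}$ as its entire fixed-point set, since the per-step graph may be disconnected, so the collapse $\bigcap_r\mathrm{fix}(\boldsymbol{W}_r)=\mathcal{A}$ is exactly where the $Q$-connectivity of Assumption \ref{asm: Q-con } and primitivity of the product (through Perron--Frobenius) are indispensable; the accompanying bookkeeping is to track nonemptiness of the intermediate fixed-point intersections so that Lemma \ref{prop:Composition of paracontracting operators } is legitimately invoked at every stage.
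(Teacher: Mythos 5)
Your proposal is correct and follows essentially the same route as the paper's proof: both invoke Lemma \ref{lem: Doubly stochastic matrix} and Lemma \ref{prop:Composition of paracontracting operators } to compose the paracontractions and intersect their fixed-point sets, then collapse $\bigcap_{r=1}^Q \mathrm{fix}(\boldsymbol{W}_r) = \mathrm{fix}(\boldsymbol{W}_Q \cdots \boldsymbol{W}_1) = \mathcal{A}$ via the Perron--Frobenius theorem applied to the strongly connected product. Your explicit bookkeeping of the nonemptiness of the intermediate fixed-point intersections (via $\mathcal{A} \cap C \neq \varnothing$ under Assumption \ref{asm: nonempty robust core}) and of the primitivity of $W_Q \cdots W_1$ is a welcome tightening of details the paper leaves implicit, but it is not a different argument.
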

\begin{proof}
(i): It follows directly from Lemmas \ref{lem: Doubly stochastic matrix} and \ref{prop:Composition of paracontracting operators }.\\
(ii): By Lemmas \ref{lem: Doubly stochastic matrix} and \ref{prop:Composition of paracontracting operators }, $\mathrm{fix}(\boldsymbol{M}_Q\boldsymbol{W}_Q  \circ \cdots \circ  \boldsymbol{M}_1 \boldsymbol{W}_1) = \mathrm{fix}(\boldsymbol{M}_Q) \cap \cdots \cap \mathrm{fix}(\boldsymbol{M}_1) \cap \mathrm{fix}(\boldsymbol{W}_Q) \cap \cdots \cap \mathrm{fix}(\boldsymbol{W}_1)$. Again, by Lemmas \ref{lem: Doubly stochastic matrix} and \ref{prop:Composition of paracontracting operators }, $\bigcap_{r=1}^Q \mathrm{fix}(\boldsymbol{W}_r) = \mathrm{fix}(\boldsymbol{W}_Q \cdots \boldsymbol{W}_1)$ and since the composition $\boldsymbol{W}_Q \cdots \boldsymbol{W}_1$ is strongly connected, by the Perron-Frobenius theorem, $\mathrm{fix}(\boldsymbol{W}_Q \cdots \boldsymbol{W}_1) = \mathcal{A}$. Finally, as $\bigcap_{r=1}^Q \mathrm{fix}(\boldsymbol{M}_r) = C$, $\mathrm{fix}(\boldsymbol{M}_Q\boldsymbol{W}_Q  \circ \cdots \circ  \boldsymbol{M}_1 \boldsymbol{W}_1) = \mathcal{A} \cap C$. 
\end{proof}
Given these preliminary results, we are now ready to present the proof of Theorem \ref{theorem: main}. 
\begin{proof}(Theorem \ref{theorem: main})
Let us define the sub-sequence of $\boldsymbol{x}^{k} \text{ for all } k \in \mathbb{N}$ as $\boldsymbol{z}^t = \boldsymbol{x}^{(t-1)Q}$ for each $t \geq 2 $ with $Q$ being the integer in Assumptions \ref{asm: Q-con } and \ref{asm: Q admissible bargaining}. Then,
\begin{equation}\label{eq: z^{k} subsequence}
    \boldsymbol{z}^{t+1} = \boldsymbol{M}^{tQ - 1}\boldsymbol{W}^{tQ - 1}  \circ \cdots \circ  \boldsymbol{M}^{(t-1)Q} \boldsymbol{W}^{(t-1)Q} \boldsymbol{z}^t
\end{equation}
for $t \geq 2$. It follows from assertion 1 of Lemma \ref{lemma: fix points} that the maps $\boldsymbol{x} \longmapsto (\boldsymbol{M}^{tQ - 1}\boldsymbol{W}^{tQ - 1}  \circ \cdots \circ  \boldsymbol{M}^{(t-1)Q} \boldsymbol{W}^{(t-1)Q})(\boldsymbol{x}),$ $ t \geq 2 $ are all paracontractions. Also, under Assumption \ref{asm: fixed graph}, there can be only finitely many such maps. Furthermore, by assertion 2 of Lemma \ref{lemma: fix points}, the set of fixed-points of each map is
$\mathcal{A} \cap \mathcal{C} ^N$. Thus, by Lemma \ref{lemma: finite family}, the iteration in (\ref{eq: z^{k} subsequence}) converges to some $ \bar{\boldsymbol{z}} \in \mathcal{A} \cap \mathcal{C} ^N$. 
\end{proof}
\vspace{-2.5mm}
\subsection{Discussion}
In our proposed bargaining process in (\ref{main_it}), let $ \textstyle \boldsymbol{M}^k = \mathrm{proj}_{\mathcal{X}^{k}}$, for all $k \in \mathbb{N}$, which is a paracontraction [\cite{bauschke2011convex}, Prop. 4.16]. Then, the resulting iteration, i.e., $ \textstyle  \boldsymbol{x}^{k+1} = \mathrm{proj}_{\mathcal{X}^{k}} (\boldsymbol{W}^k { \boldsymbol{x}}^{k})$ reduces to the bargaining protocol presented in \cite{Nedic2013}. In that setup, the communication graphs and adjacency matrices also satisfy our Assumptions \ref{asm: Q-con } and \ref{asm: graph}, respectively. The bargaining algorithm in \cite{Nedic2013} lies within our bargaining framework, but with the exception that, in our setup, the value function $v^{k}$ can only take finitely many values in a bounded set. We emphasize that
our framework provides an agent with the flexibility to choose a paracontraction operator, not necessarily a projection. This allows an agent to propose a payoff on the boundary or in the interior of its bounding set.\\
{Finally, we note that in the bargaining process, the agents require a lower number of coalitional values to evaluate the bounding set compared to payoff allocation, i.e., $2^{N-1}$.}
\section{Numerical Simulations}\label{sec: illustrations}
 In this section, we present numerical illustrations of two realistic scenarios modeled as coalitional games with uncertain coalitional values. In the first scenario, we present a collaboration among three firms for providing abstract services; in the second scenario, we simulate the motivational application introduced in Section \ref{sec: intro}. Our goal for presenting the former is to illustrate the robust core and to differentiate between the structure of payoff allocation and bargaining processes during the negotiation stages. Therefore, we use only three agents (firms) to be able to illustrate the outcome graphically in dimension 2. Further, in the second simulation scenario, we demonstrate a more comprehensive application, namely cooperative energy storage optimization in a smart grid framework. 
 \vspace{-3.5mm}
\subsection{Illustrative example}
Consider three firms $\mathcal{I} = \{1, 2, 3\}$, which individually provide certain services to their customers. These firms can improve their efficiency by collaborating activities and hence generate a higher value. This collective value is a remuneration of services agreed upon by a customer and the coalition of firms in advance. To make this collaboration viable, all three firms have to agree upon their share of the generated value. The resulting scenario is a coalitional game among firms, a solution to which is an agreed payoff distribution in the core.\\
The core allocation in (\ref{core}) depends on the value of all possible sub-coalitions. In our example, the firms know with certainty about their individual values $v({\{i\}})$ and the value of the grand coalition $v(\mathcal{I})$, i.e., the final contract. However, the sub-coalitions are never formed and hence their values are unknown. We assume that the values of the sub-coalitions are random within a bounded interval. Under the above conditions, the coalitional game among the three firms takes the form of a robust coalitional game. Thus, we can apply the robust payoff distribution methods proposed in Sections \ref{sec: payoff allocation} and \ref{sec: bargaining}.\\
\begin{table}[]
 \centering

\caption{Coalitional values for the illustrative example}
\label{tab: coalitional values}
\begin{tabular}{ccccc} \toprule
${v({\{i\}})}, i \in \mathcal{I}$ & ${v({\{1,2\}})}$ & ${v({\{1,3\}})}$ & ${v({\{2,3\}}) }$ & $ {v({\{1,2,3\}})}$ \\ \midrule
 $1$ & $\{2,3,4\}$ & $\{2,3,4\}$ & $\{3,4,5\}$ & $8$ \\ \bottomrule
 \end{tabular}
 \vspace{-2.5mm}
 \end{table}
The coalitional values of this coalitional game among firms are given in Table \ref{tab: coalitional values}. For example, at each iteration $k$, the value function of the coalition $\{1,2\}$, i.e., $v({\{1,2\}})$, takes its value randomly from the set $\{2,3,4\}$ with uniform probability. The possibility of realizing only an integer value, with uniform probability, satisfies the assumption of finite operator families in Theorems \ref{thm: payoff allocation}$-$\ref{theorem: main} and also ensures that the resulting sequence satisfies Assumption \ref{asm: Q admissible}.  Furthermore, we consider a fixed, strongly connected communication graph which therefore satisfies Assumptions \ref{asm: Q-con }$-$\ref{asm: fixed graph}. For the initial proposals, we assume that each agent allocates entire value of coalition to itself, e.g. the initial proposal by firm $1$ will be $\boldsymbol{x}_1(1) = [\:8 \; 0 \; 0\:]^\top$.  Next, we evaluate the payoff distributions generated by payoff allocation algorithm in (\ref{main_it_allocation}) and the bargaining protocol in (\ref{main_it}).
\subsubsection{Distributed payoff allocation}
For implementation, we choose an over-projection operator, which is nonexpansive, and the step size $\alpha_k = 0.5$ for all $k \in \mathbb{N}$. The resulting iteration for each agent $i$ is as in (\ref{eq: illustartion proposal}).
  In Figure \ref{fig: allocation illustration}, we depict two arbitrary instances of the core set $\mathcal{C}^{'}, \mathcal{C}^{''}$ and the robust core $\mathcal{C}_0$ in (\ref{eq: robust core}). The allocation process in (\ref{main_it_allocation}) converges to consensus on the payoff allocation, $\bar{\boldsymbol{x}} = [2.4, 3, 2.6]$, which belongs to the robust core, i.e., $\mathcal{A}\cap \mathcal{C}_0^N$. An allocation in the robust core ensures that even under uncertainty on coalitional values, the collaboration will emerge as the only rational choice.
 We note that, in payoff allocation process each firm does not need to have deterministic information of the core, which is weaker from the usual assumption of coalitional games \cite{Bauso2015}. In fact, here the firms only  know the bounds on coalitional values.
 \setlength\belowcaptionskip{-2.5ex}
 \begin{figure}[t]
\centering
\includegraphics[width=7.6cm]{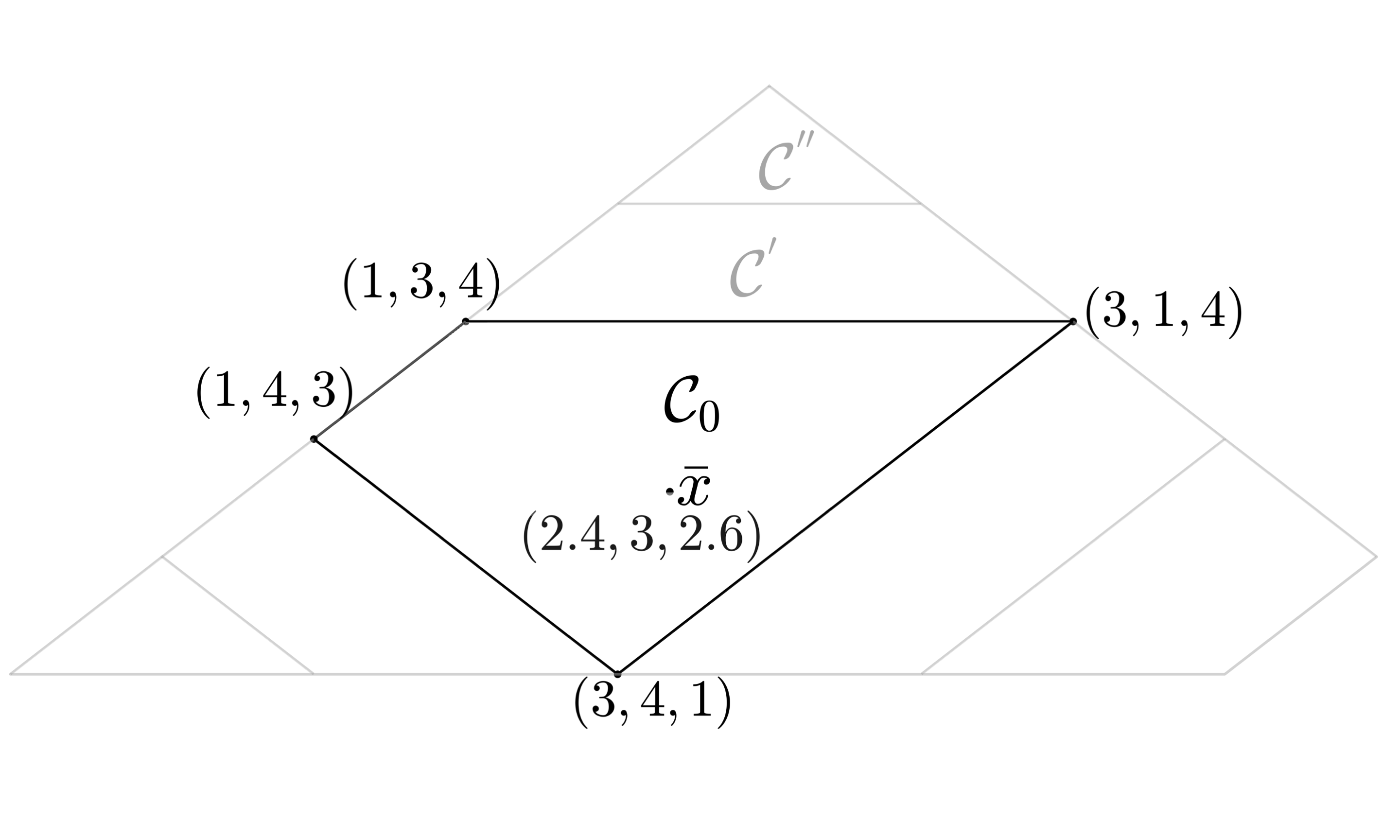}
\caption{Three instances of the core set, $\mathcal{C}_0, \mathcal{C}^{'},\mathcal{C}^{''}$ and final payoff allocation $\bar{\boldsymbol{x}}$.} 
\label{fig: allocation illustration}
\end{figure}
\subsubsection{Distributed bargaining protocol}
We implement the iteration in (\ref{main_it}), by using the projection operator, which is a paracontraction and therefore, it satisfies the assumptions of Theorem \ref{theorem: main}. In Figure \ref{fig: bargaining illustration}, we show an arbitrary negotiation step during the bargaining process. Here, a firm $i$ agrees with the payoff distribution only if it belongs to its bounding set $\mathcal{X}_i$. Thus, any mutually agreed payoff distribution must belong to the intersection of bounding sets, i.e., $\mathcal{C}=\bigcap_{i \in \mathcal{I}} \mathcal{X}_i$. Because of the uncertainty in the values of sub-coalitions, the bounding sets vary with iterations resulting in an instantaneous core as in (\ref{core}). The bargaining process in (\ref{main_it}) ensures convergence to the intersection of the instantaneous cores, i.e., the robust core $\mathcal{C}_0$ in (\ref{eq: robust core}). Thus, in our example, the resulting payoff distribution $\bar{\boldsymbol{x}}= [2.33, 2.833, 2.833]$ belongs to the set $\mathcal{A}\cap \mathcal{C}_0^N$.\\
Compared with the payoff allocation process, the knowledge requirement for the firms in the bargaining protocol is even weaker. Here, the firms are required to know the bounds on the values of their own sub-coalitions only, which is a reasonable assumption for a cooperation scenario. 
\vspace{-2mm}
\subsection{Cooperative energy storage optimization}
In this subsection, we simulate the cooperative ES optimization problem described in Section \ref{sec: intro} as a motivational example. We partially adapt the optimization setup from \cite{han2018constructing} and, additionally, introduce uncertainty in the RES generation.
\subsubsection{Problem setup}
Consider $N$ prosumers in an energy coalition $\mathcal{I}$, each equipped with RES generation and ES system. Our goal is to cooperatively optimize ES systems, by considering them as a single collective storage, for minimizing the coalitional cost in (\ref{eq: coalitional cost}) and, distribute the resulting cost savings, i.e., coalitional value in (\ref{eq: coalitional value}) among prosumers. Moreover, the share of each prosumer, i.e., the payoff should belong to the robust core in (\ref{eq: robust core}).
We compute the coalitional value of each coalition $S \subseteq \mathcal{I}$ for a time period of length $K$ by solving a linear optimization problem. We assume that the ES system of each prosumer $i$ has an energy capacity of \(e_{i} \geq 0,\) a charge and discharge limit, \(\overline{b}_{i} \geq 0\) and \(\underline{b}_{i} \geq 0\) respectively, a charge and discharge efficiency \(\eta_{i}^{\text{ch}}\) and \(\eta_{i}^{\text {dc }} \in(0,1)\), respectively. We also consider an initial state of charge for each ES, \(\text{SoC}_{i}^{0} \in[0,1]\) where $1$ represents a fully charged battery. We denote the amount of energy stored and released from agent $i$'s ES during time $t$ be $b_{i}^{t +}$ and  $b_{i}^{t -}$, respectively.\\
\setlength\belowcaptionskip{-2.5ex}
\begin{figure}[t]
\centering
\includegraphics[width=6.6cm]{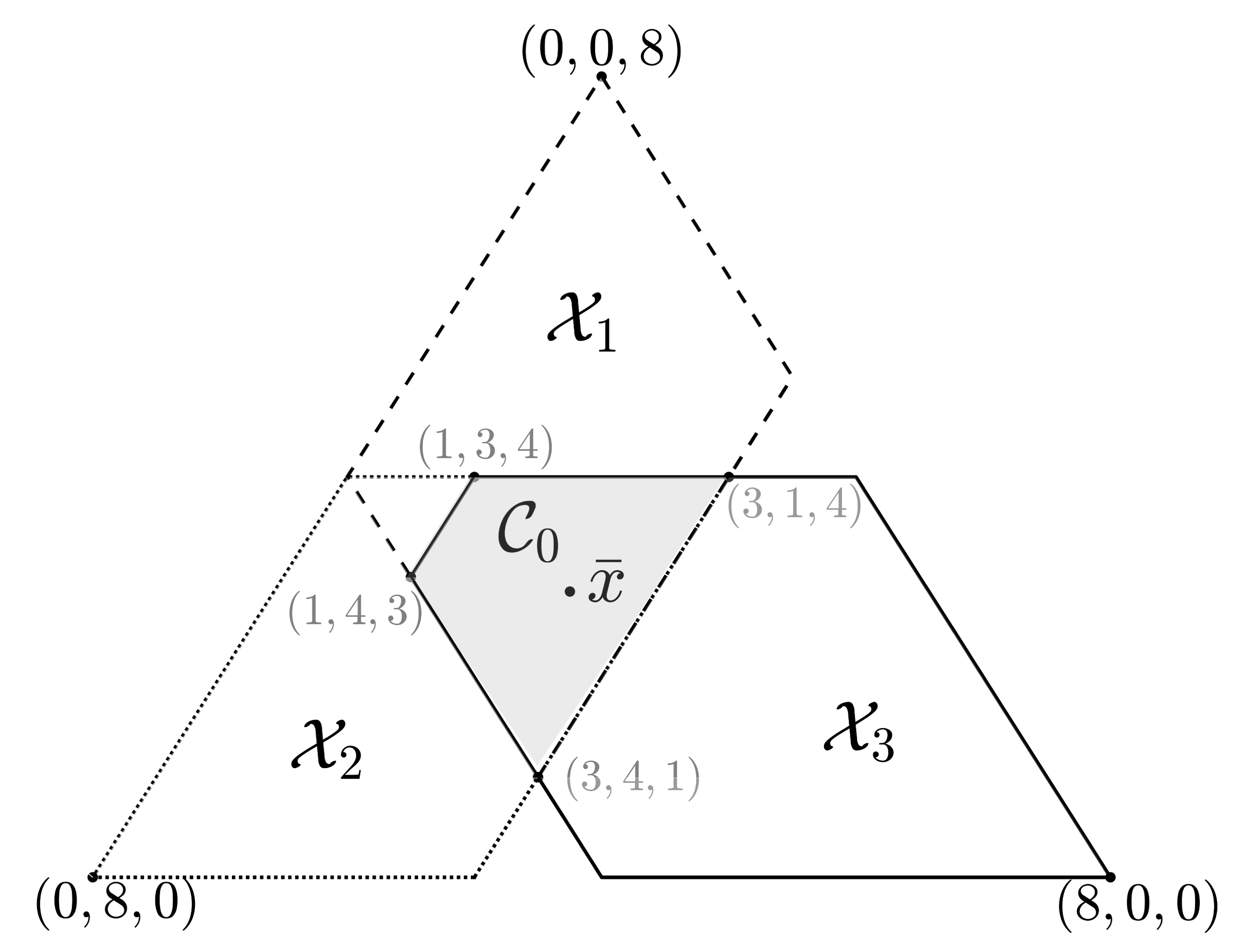}
\caption{An instance of bargaining process showing the bounding sets of the agents, $\mathcal{X}_1$, $\mathcal{X}_2$, $\mathcal{X}_3$, and the robust core $\mathcal{C}_0=\bigcap_{i \in \mathcal{I}} \mathcal{X}_i$. $\bar{\boldsymbol{x}}$ is the final payoff vector.} 
\label{fig: bargaining illustration}
\end{figure}
Next, let us denote the vectors representing charge and discharge energies of all prosumers by $\boldsymbol{b}^-$ and $\boldsymbol{b}^+$. Moreover, because of the difference in buying and selling prices of electricity, let us divide the coalitional net load into a positive part $\boldsymbol{L}^+$, which corresponds to the energy bought from the grid, and a non-positive part $\boldsymbol{L}^-$, which represents the energy sold to the grid. These four vectors are the decision variables of our ES optimization problem that computes the coalitional cost $c (S)$  for each coalition $S \subseteq \mathcal{I}$ as follows:
\begin{align}
\min _{\substack{\boldsymbol{b}^{+}, \; \boldsymbol{b}^{-},\\  \boldsymbol{L}^{+}, \; \boldsymbol{L}^{-}}} & \sum_{t=1}^{K}\bigg\{ p^{t}_{b}  \sum_{i \in S} L_{i}^{t +}+p^{t}_S  \sum_{i \in S} L_{i}^{t -}\bigg\}\tag{12a}\\ 
  \mathrm{s.t.} \quad &\: L_{i}^{t -} \leq 0 \leq L_{i}^{t +} \tag{12b}\\
& \sum_{i \in S} (b_{i}^{t +}+b_{i}^{t -}+q_{i}^{t}) \leq \sum_{i \in S} L_{i}^{t +} \tag{12c} \\
& \sum_{i \in S}(b_{i}^{t +}+b_{i}^{t -}+q_{i}^{t}) = \sum_{i \in S} (L_{i}^{t +}+L_{i}^{t -}) \tag{12d} \\
& \:\underline{b}_{i} \leq b_{i}^{t -} \leq 0 \leq b_{i}^{t +} \leq \overline{b}_{i} \tag{12e} \label{eq: charge limitation}\\
& \sum_{t=1}^{K}\left(b_{i}^{t +} \eta_{i}^{\text{ch}}+b_{i}^{t -} / \eta_{i}^\text{dc}\right)=0, \quad \forall i \in S \tag{12f} \label{eq: charge variation}\\
&\: 0 \leq e_{i} \text{SoC}_{i}^{0}+\sum_{t=1}^{m}\left(b_{i}^{t +} \eta_{i}^{\text{ch}}+b_{i}^{t -}/ \eta_{i}^\text{dc}\right) \leq e_{i} \tag{12g} \label{eq: initial charge}\\
&\: \forall i \in S, \forall t \in[1, K], \forall m \in[1, K] \nonumber.
\end{align} 
The constraints (\ref{eq: charge limitation})$-$(\ref{eq: initial charge}) are related to the physical limitations of ES systems. Specifically, (\ref{eq: charge limitation}) represents the limitation on the rate of charge/discharge, (\ref{eq: initial charge}) represents energy storage capacity and (\ref{eq: charge variation}) ensures that the state of charge of each ES at the end of the horizon $K$ is same as the initial, i.e.,  $\text{SoC}^K_i = \text{SoC}^0_i$. For further details, we refer to \cite{han2018constructing}. \\
To proceed, we introduce uncertainty in the net energy consumption $q_i^t$, since the generation of RES is uncertain. However,  $q_i^t$ can only realize values from the interval $[q_i^\text{min}, q_i^\text{max}]$ as explained in Section \ref{sec: intro}. Here, these bounds refer to the optimistic and conservative forecasts. 
\setlength\belowcaptionskip{-0.5ex}
\begin{figure}[t]
\begin{subfigure}{.5\textwidth}
  \centering
  \includegraphics[width=0.96\linewidth]{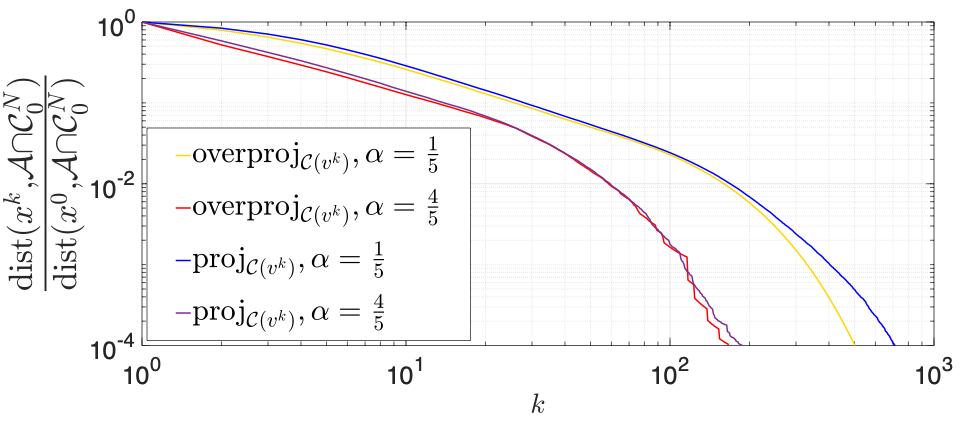}  
  \caption{}
  \label{fig: allocation_app_avg}
\end{subfigure}
\setlength\belowcaptionskip{-1ex}
\begin{subfigure}{.5\textwidth}
\centering
  \includegraphics[width=0.96\linewidth, height = 4.3cm]{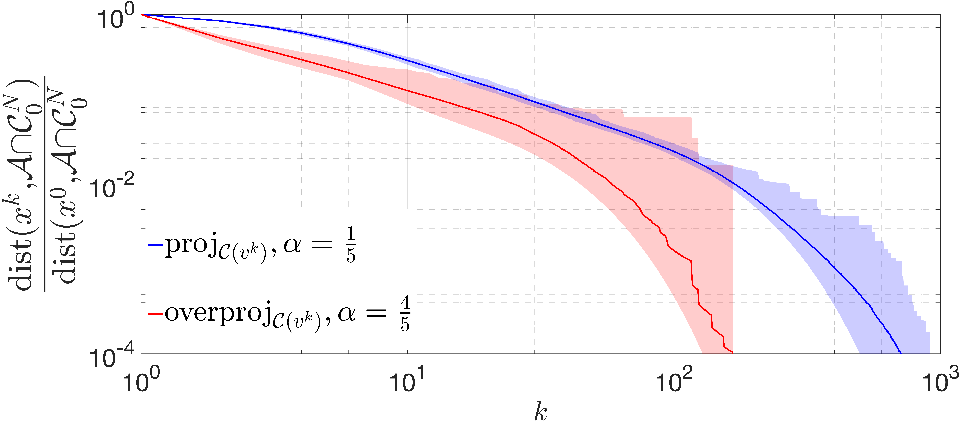}  
  \caption{}
  \label{fig: allocation_app_var}
\end{subfigure}
\setlength\belowcaptionskip{-2.5ex}
\caption{(a) Sampled average of the trajectories of $\mathrm{dist}(\boldsymbol{x}^k, \mathcal{A} \cap\mathcal{C}_0^N)/\mathrm{dist}(\boldsymbol{x}^0, \mathcal{A} \cap\mathcal{C}_0^N)$ for distributed allocation algorithm with operator $\mathrm{proj}_{\mathcal{C}{(v^k)}}$ for $ \alpha = 1/5, 4/5$ and $\mathrm{overproj}_{\mathcal{C}{(v^k)}}$ for $ \alpha = 1/5, 4/5$. (b) Sampled average of selected trajectories with spread of samples shown by shaded region.}
\vspace{-2.5mm}
\end{figure}
\subsubsection{ES optimization as a robust coalitional game}
Let us now put the optimisation setup, presented above, in the perspective of the payoff distribution problem. At the first stage, the grand prosumer coalition $\mathcal{I}$ optimizes their energy operation collectively via an aggregator over a time horizon of length $K$ and sells any expected excess of energy (available at each time interval $t$) to the retailer.  The coalition performs this process in advance and gets remunerated by the retailer. The additional value gained by the coalition as a result of the cooperation is given by (\ref{eq: coalitional value}). At the second stage, the attained coalitional value, i.e., $v(\mathcal{I})$ is distributed among the agents so that the payoff to each agent belongs to the robust core in (\ref{eq: robust core}). Thus, for the payoff distribution, an aggregator computes the value $v(S)$ for all $S \subset \mathcal{I}$ by solving the optimization problem presented above. To account for the uncertainty in the RES generation, the aggregator computes the bounds on the coalitional values as $\underline{v} (S) \leq v (S) \leq \overline{v}(S), S \subset \mathcal{I} $ and communicates the vector $v$ containing these bounds to all the agents, who in turn initiate the payoff distribution process. \\
This scenario, with uncertainty, requires robust solution and thus demonstrates practicality of our distributed allocation and bargaining algorithms. Furthermore, the core set is not singleton and different core payoffs can favour different agents. Therefore, the possibility of biased behavior of the aggregator can render a central computation of the payoffs unacceptable for prosumers. Thus, presented application further appreciates the distributed structure of the proposed algorithms.
\setlength\belowcaptionskip{-0.5ex}
\begin{figure}[t]
\begin{subfigure}{.5\textwidth}
  \centering
  \includegraphics[width=0.96\linewidth]{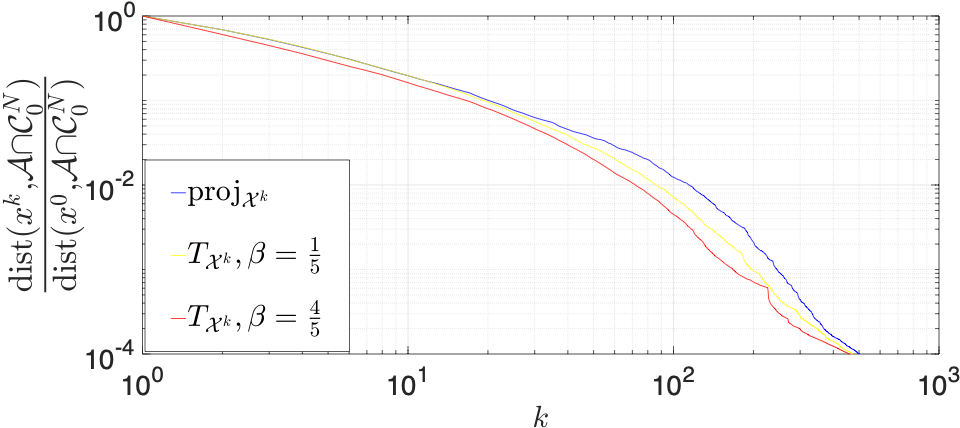}  
  \caption{}
  \label{fig: bargaining_app_avg}
\end{subfigure}
\setlength\belowcaptionskip{-1ex}
\begin{subfigure}{.5\textwidth}
 \centering
  \includegraphics[width=0.96\linewidth,  height = 4.3 cm]{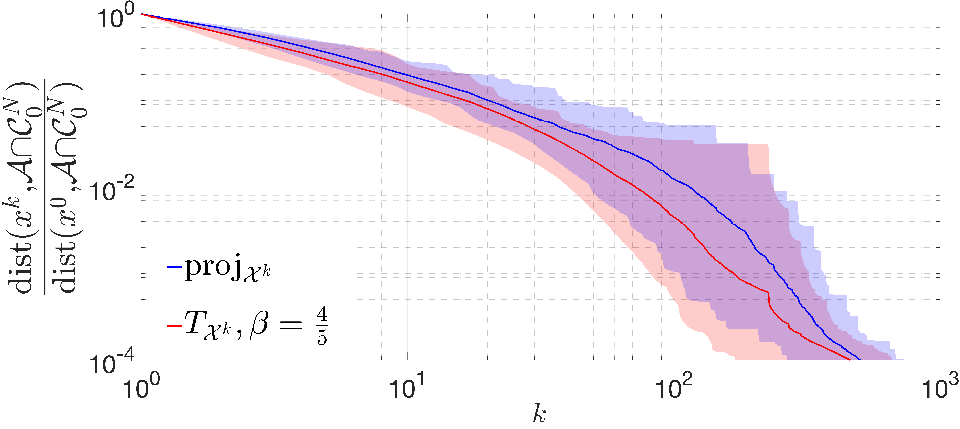}  
  \caption{}
  \label{fig: bargaining_app_var}
\end{subfigure}
\setlength\belowcaptionskip{-2.5ex}
\caption{(a) Sampled average of the trajectories of $\mathrm{dist}(\boldsymbol{x}^k, \mathcal{A} \cap\mathcal{C}_0^N)/\mathrm{dist}(\boldsymbol{x}^0, \mathcal{A} \cap\mathcal{C}_0^N)$ for distributed bargaining with operator $\mathrm{proj}_{\mathcal{X}^k}$ and $T_{\mathcal{X}^k}:= (1 - \beta) \mathrm{proj}_{\mathcal{X}^k}(\cdot) + \beta \mathrm{overproj}_{\mathcal{X}^k}(\cdot)$ for $ \beta = 1/5, 4/5$. (b) Sampled average of trajectories with spread of samples shown by shaded region.}
\end{figure}
\subsubsection{Simulations studies}
For the numerical simulation, we select a time horizon of $K =6$ hours {and an interval $t = 1$ hour. We consider a coalition of 6 prosumers} where each prosumer $i$ is equipped with the battery of energy capacity $e_i = 7$ kWh, a maximum charge power $\overline{b}_i = 3.5$ kW, a maximum discharge power $\underline{b}_i = 3.5$ kW, both charge and discharge efficiencies of $\eta_{i}^{\text{ch}} = \eta_{i}^\text{dc} = 95 \%$ and an initial state of charge $\text{SoC}_i^0 = 50\%$. We put the bounds of optimistic and conservative forecast on the RES generation of each agent and randomly generate net consumption scenarios. We then evaluate the coalitional value in (\ref{eq: coalitional value}) for each scenario and compute the bounds $\underline{v} (S) \text{ and } \overline{v} (S), S \subset \mathcal{I} $. We then run $100$ different trajectories of payoff distribution processes. We assume, for each trajectory, that agents initially allocate the whole value  $v(\mathcal{I})$ to themselves. {Also, to make sure that every prosumer's payoff proposal receives adequate importance and sufficient exposure, during negotiation, we assume a strongly connected communication graph among them which satisfies Assumptions \ref{asm: Q-con } and \ref{asm: graph}. Furthermore, as the coalitional value in cooperative energy optimization is in monetary terms, the prosumers consider reasonably rounded of units (dollars, cents etc.) which results in a finite set of points between the bounds $\underline{v} (S) \text{ and } \overline{v} (S)$, according to Assumption \ref{asm: Q admissible}.}\\
Moreover, for the distributed allocation process in (\ref{main_it_allocation}), the whole coalitional value vector $v$ is communicated to the agents whereas, for the bargaining process in (\ref{main_it}) only the value of agent's own coalitions are communicated. Finally, the agents initiate a robust coalitional game to reach the consensus on a payoff which belongs to the robust core in (\ref{eq: robust core}).  This payoff guarantees the stability of the grand coalition which in turn has considerable operational benefits for the power grid \cite{han2018incentivizing}.\\
We first report the numerical results for the distributed allocation process. In Figure \ref{fig: allocation_app_avg}, we compute the average of the sample trajectories obtained by $100$ runs and report the normalized distances $\mathrm{dist}(\boldsymbol{x}(k), \mathcal{C}_0\cap \mathcal{A})/\mathrm{dist}(\boldsymbol{x}(0), \mathcal{C}_0\cap \mathcal{A})$, for the projection and over-projection operators, by varying the parameter $\alpha$. We can observe that an over-projection operator with higher value of $\alpha$ results in faster convergence. In Figure \ref{fig: allocation_app_var}, we provide the spread of the sample trajectories to depict the best and worst convergence scenarios in our sample set.\\
Lastly, we simulate the distributed bargaining process in (\ref{main_it}) and report the average of the sample trajectories. In Figure \ref{fig: bargaining_app_avg}, we show the comparison of the normalized distances. We conduct the analysis by utilizing the projection operator and the convex combination of projection and over-projection operators, i.e., $T_{\mathcal{X}^k}:= (1 - \beta) \mathrm{proj}_{\mathcal{X}^k}(\cdot) + \beta \mathrm{overproj}_{\mathcal{X}^k}(\cdot)$ for varying $ \beta$. Both the operators are paracontraction operators \cite{bauschke2011convex}. Figure \ref{fig: bargaining_app_var} shows the spread of the sample trajectories. 
\vspace{-2mm}
\section{Conclusion}\label{sec: conclusion}
{We have addressed the problem of payoff distribution in robust coalitional games over time-varying communication networks. The goal is to make players reach a consensus on the payoff allocation that belongs to the robust core. 
We have shown that distributed payoff allocation and bargaining algorithms based on nonexpansive and paracontraction operators, e.g. over-projections, and network averaging converge consensually to the robust core, even with time-varying coalitional values.}
 \vspace{-2mm}


%





\ifCLASSOPTIONcaptionsoff
  \newpage
\fi





\bibliographystyle{IEEEtran}
\bibliography{IEEEabrv,Bibliography}

\begin{thebibliography}{10}
\providecommand{\url}[1]{#1}
\csname url@rmstyle\endcsname
\providecommand{\newblock}{\relax}
\providecommand{\bibinfo}[2]{#2}
\providecommand\BIBentrySTDinterwordspacing{\spaceskip=0pt\relax}
\providecommand\BIBentryALTinterwordstretchfactor{4}
\providecommand\BIBentryALTinterwordspacing{\spaceskip=\fontdimen2\font plus
\BIBentryALTinterwordstretchfactor\fontdimen3\font minus
  \fontdimen4\font\relax}
\providecommand\BIBforeignlanguage[2]{{%
\expandafter\ifx\csname l@#1\endcsname\relax
\typeout{** WARNING: IEEEtran.bst: No hyphenation pattern has been}%
\typeout{** loaded for the language `#1'. Using the pattern for}%
\typeout{** the default language instead.}%
\else
\language=\csname l@#1\endcsname
\fi
#2}}

\bibitem{myerson2013game}
R.~B. Myerson, \emph{Game theory}.\hskip 1em plus 0.5em minus 0.4em\relax
  Harvard University Press, 2013.

\bibitem{han2018incentivizing}
L.~Han, T.~Morstyn, and M.~McCulloch, ``Incentivizing prosumer coalitions with
  energy management using cooperative game theory,'' \emph{IEEE Transactions on
  Power Systems}, vol.~34, no.~1, pp. 303--313, 2018.

\bibitem{Saad2011}
W.~Saad, Z.~Han, and H.~V. Poor, ``{Coalitional Game Theory for Cooperative
  Micro-Grid Distribution Networks},'' in \emph{IEEE International Conference
  on Communications Workshops (ICC)}.\hskip 1em plus 0.5em minus 0.4em\relax
  IEEE, 2011, pp. 1--5.

\bibitem{Saad2009}
W.~Saad, Z.~Han, M.~Debbah, A.~Hj{\o}rungnes, and T.~Başar, ``{Coalitional
  game theory for communication networks},'' \emph{IEEE Signal Processing
  Magazine}, vol.~26, no.~5, pp. 77--97, 2009.

\bibitem{fele2017coalitional}
F.~Fele, J.~Maestre, and E.~Camacho, ``Coalitional control: Cooperative game
  theory and control,'' \emph{IEEE Control Systems Magazine}, vol.~37, no.~1,
  pp. 53--69, 2017.

\bibitem{shapley1953value}
L.~S. Shapley, ``A value for n-person games,'' \emph{Contributions to the
  Theory of Games}, vol.~2, no.~28, pp. 307--317, 1953.

\bibitem{maschler1992bargaining}
M.~Maschler, ``The bargaining set, kernel, and nucleolus,'' \emph{Handbook of
  game theory with economic applications}, vol.~1, pp. 591--667, 1992.

\bibitem{raja2020}
A.~A. Raja and S.~Grammatico, ``On the approachability principle for
  distributed payoff allocation in coalitional games,'' in \emph{21st IFAC
  World Congress}, 2020.

\bibitem{filar2000dynamic}
J.~A. Filar and L.~A. Petrosjan, ``Dynamic cooperative games,''
  \emph{International Game Theory Review}, vol.~2, no.~01, pp. 47--65, 2000.

\bibitem{lehrer2013core}
E.~Lehrer and M.~Scarsini, ``On the core of dynamic cooperative games,''
  \emph{Dynamic Games and Applications}, vol.~3, no.~3, pp. 359--373, 2013.

\bibitem{kranich2005core}
L.~Kranich, A.~Perea, and H.~Peters, ``Core concepts for dynamic tu games,''
  \emph{International Game Theory Review}, vol.~7, no.~01, pp. 43--61, 2005.

\bibitem{bauso2009robust}
D.~Bauso and J.~Timmer, ``Robust dynamic cooperative games,''
  \emph{International Journal of Game Theory}, vol.~38, no.~1, pp. 23--36,
  2009.

\bibitem{alparslan2009cooperation}
S.~Z. Alparslan-Goek, S.~Miquel, and S.~H. Tijs, ``Cooperation under interval
  uncertainty,'' \emph{Mathematical Methods of Operations Research}, vol.~69,
  no.~1, pp. 99--109, 2009.

\bibitem{Nedic2013}
A.~Nedich and D.~Bauso, ``{Dynamic coalitional TU games: Distributed bargaining
  among players' neighbors},'' \emph{IEEE Transactions on Automatic Control},
  vol.~58, no.~6, pp. 1363--1376, 2013.

\bibitem{han2018constructing}
L.~Han, T.~Morstyn, and M.~McCulloch, ``Constructing prosumer coalitions for
  energy cost savings using cooperative game theory,'' in \emph{2018 Power
  Systems Computation Conference (PSCC)}.\hskip 1em plus 0.5em minus
  0.4em\relax IEEE, 2018, pp. 1--7.

\bibitem{baeyens2013coalitional}
E.~Baeyens, E.~Y. Bitar, P.~P. Khargonekar, and K.~Poolla, ``Coalitional
  aggregation of wind power,'' \emph{IEEE Transactions on Power Systems},
  vol.~28, no.~4, pp. 3774--3784, 2013.

\bibitem{chakraborty2018sharing}
P.~Chakraborty, E.~Baeyens, K.~Poolla, P.~P. Khargonekar, and P.~Varaiya,
  ``Sharing storage in a smart grid: A coalitional game approach,'' \emph{IEEE
  Transactions on Smart Grid}, vol.~10, no.~4, pp. 4379--4390, 2018.

\bibitem{feng2019coalitional}
C.~Feng, F.~Wen, S.~You, Z.~Li, F.~Shahnia, and M.~Shahidehpour, ``Coalitional
  game-based transactive energy management in local energy communities,''
  \emph{IEEE Transactions on Power Systems}, vol.~35, no.~3, pp. 1729--1740,
  2019.

\bibitem{nedic2017achieving}
A.~Nedic, A.~Olshevsky, and W.~Shi, ``Achieving geometric convergence for
  distributed optimization over time-varying graphs,'' \emph{SIAM Journal on
  Optimization}, vol.~27, no.~4, pp. 2597--2633, 2017.

\bibitem{nedic2010constrained}
A.~Nedic, A.~Ozdaglar, and P.~A. Parrilo, ``Constrained consensus and
  optimization in multi-agent networks,'' \emph{IEEE Transactions on Automatic
  Control}, vol.~55, no.~4, pp. 922--938, 2010.

\bibitem{browder1967convergence}
F.~E. Browder, ``Convergence theorems for sequences of nonlinear operators in
  banach spaces,'' \emph{Mathematische Zeitschrift}, vol. 100, no.~3, pp.
  201--225, 1967.

\bibitem{Fullmer2018}
D.~Fullmer and A.~S. Morse, ``{A Distributed Algorithm for Computing a Common
  Fixed Point of a Finite Family of Paracontractions},'' \emph{IEEE
  Transactions on Automatic Control}, vol.~63, no.~9, pp. 2833--2843, 2018.

\bibitem{grammatico2017proximal}
S.~Grammatico, ``Proximal dynamics in multiagent network games,'' \emph{IEEE
  Transactions on Control of Network Systems}, vol.~5, no.~4, pp. 1707--1716,
  2017.

\bibitem{bauschke2011convex}
H.~H. Bauschke and P.~L. Combettes, \emph{Convex analysis and monotone operator
  theory in Hilbert spaces}, 2nd~ed.\hskip 1em plus 0.5em minus 0.4em\relax
  Springer, 2017.

\bibitem{Bauso2015}
D.~Bauso and G.~Notarstefano, ``{Distributed n-player approachability and
  consensus in coalitional games},'' \emph{IEEE Transactions on Automatic
  Control}, vol.~60, no.~11, pp. 3107--3112, 2015.

\bibitem{Elsner1992}
L.~Elsner, I.~Koltracht, and M.~Neumann, ``{Convergence of sequential and
  asynchronous nonlinear paracontractions},'' \emph{Numerische Mathematik},
  vol.~62, no.~1, pp. 305--319, 1992.

\end{thebibliography}

\vfill


\end{document}